\documentclass[aps,pra,superscriptaddress,twocolumn,nofootinbib]{revtex4-2}
\pdfoutput=1
\usepackage{dsfont}
\usepackage{comment}
\usepackage[utf8]{inputenc}
\usepackage[T1]{fontenc}
\usepackage[british]{babel}
\usepackage[dvipsnames,x11names]{xcolor}
\usepackage{amsmath,amssymb,amsthm,bm,amsfonts,bbm}
\usepackage{mathtools}
\usepackage[colorlinks=true,citecolor=Green,linkcolor=Purple,urlcolor=Blue]{hyperref}
\usepackage{physics}
\usepackage{extarrows}
\usepackage{enumitem}
\newcommand{\id}{\mathds{1}}

\newtheorem{theorem}{Theorem}
\newtheorem{proposition}{Proposition}

\newtheorem{corollary}{Corollary}

\newtheorem{task}{Task}

\begin{document}

\title{A lower bound on the classical simulation cost of star-network correlations}

%A lower bound on the classical simulation cost of star-network correlations
%Certifying the complexity of a qubit via joint measurements in a star network
%Classical simulation cost of star-network correlations grows exponentially with dimension
%A dimension- and party-dependent lower bound on simulating star-network correlations
%Joint measurements amplify the classical cost of simulating a qubit

\author{Martin J. Renner}
    %\email{martin.renner@icfo.eu}
    \affiliation{ICFO - Institut de Ciencies Fotoniques, The Barcelona Institute of Science and Technology, 08860 Castelldefels, Spain}

\date{\today}

\begin{abstract}
It is well established that quantum strategies outperform classical ones in several communication tasks. We study the quantum communication complexity of correlations arising from joint measurements on quantum systems distributed across a star network, where several parties each send a quantum system to a central node. We introduce an exclusion task that can be solved perfectly when each party sends a quantum $d$-level system, but would require a large classical message otherwise. In fact, the task cannot be solved with certainty if each of the $n$ parties sends a classical message with less than $n^{(d-1)}$ symbols. This implies an advantage of using quantum over classical messages in that scenario that scales with both, the dimension of the quantum system and the number of systems measured simultaneously. As an application, this shows that no finite-size classical description of a qubit suffices to reproduce the statistics of a joint measurement on sufficiently many qubits.
%even when shared randomness is available
\end{abstract}

\maketitle

\section{Introduction}

Harnessing quantum resources to outperform classical communication strategies is one of the central themes of quantum information theory~\cite{Brassard2003, Buhrman10_review}. Famous examples include protocols such as Random Access Coding~\cite{WiesnerRAC1983} and Superdense Coding~\cite{bennettdensecoding}, which show that a single quantum system can carry more (or more useful) information than its classical counterpart. A particularly striking instance of such an advantage arises in the bipartite setting of a prepare-and-measure scenario, where a sender transmits a quantum state and a receiver performs a measurement on the received state. In this setting, it is known that a quantum message of dimension $d$ can outperform any classical message of size scaling less than exponentially in $d$~\cite{raz1999, Buhrman2001PRL, hiddenmatching, Gavinsky2007}.

A complementary question to finding tasks with a large quantum-classical gap is to ask how much classical communication is required to simulate the statistics of quantum correlations exactly~\cite{Brassard1999, tonerbacon2003, Degorre2005, Montina2011KSmodel, Renner2023, Renner2023b, sidajaya23, Naik2025, Schlosser2026}. Here the bipartite answer is remarkably economical: when a single sender transmits a qubit ($d=2$) to a receiver who may then perform an arbitrary measurement, all resulting correlations can be reproduced by a classical message of only two bits together with shared randomness~\cite{tonerbacon2003, Renner2023}. It is natural to ask whether this remains true once several quantum systems are measured jointly by a common receiver, or whether joint measurements can change the picture.

In this work, we address this question by introducing a simple exclusion task on a star network with $n$ senders and one receiver. The task can be won perfectly when each sender transmits a $d$-level quantum system, whereas a perfect classical winning probability cannot be achieved if each party sends a classical message of less than $n^{(d-1)}$ symbols. This shows an exponential scaling in $d$ already for $n=2$. Furthermore, this means that no classical message of fixed size can reproduce the statistics of a qubit once it is measured jointly with sufficiently many other qubits. Our construction also provides an example of an exponential quantum advantage in a multipartite communication scenario~\cite{Bowles2015b, Doolittle2026, Pandit2026, Chakraborty2026b}, a setting that has received less attention than the bipartite case.
%This implies that no classical message of any fixed size can perfectly simulate the statistics of a qubit once it is measured jointly with sufficiently many other qubits.%This shows that joint measurements are a genuine resource for amplifying the quantum-classical communication gap.

To construct this task we use the concept of quantum state antidistinguishability, also known as quantum state exclusion. This was first introduced by Caves, Fuchs, and Schack~\cite{Caves2002} and is the basis of the well-known Pusey, Barrett, Rudolph (PBR) theorem, which challenges the epistemic view of quantum mechanics~\cite{PBR2012}. In recent years, antidistinguishability~\cite{Jain2014, Heinosaari2018, Russo2023, Johnston2025} has been recognised as a useful tool for demonstrating quantum-to-classical advantages in communication tasks~\cite{Perry2015, Heinosaari2019, Havlicek2020, Bae2025}. In particular, this work is inspired by the work of Havlíček and Barrett~\cite{Havlicek2020}.

\begin{figure}[]
    \centering
    \includegraphics[width=0.9\linewidth]{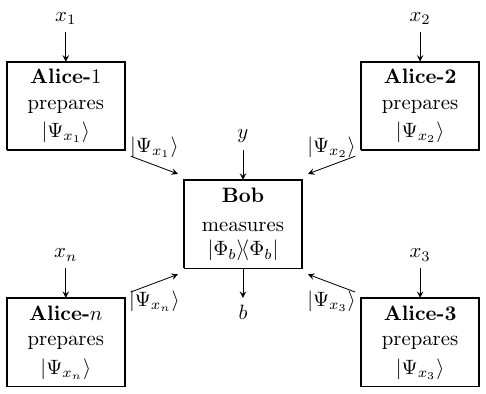}
    \caption{The setup: Several Alices receive an input and prepare a quantum system that they send to Bob. Bob can measure all received quantum systems together and produces an outcome according to his measurement result.
    %In the analogous classical scenario, each Alice sends a classical message to Bob, who determines his outcome using this classical information.
    }
    \label{figuresetup}
\end{figure}

\begin{figure*}[]
    \centering
    \includegraphics[width=1.0\linewidth]{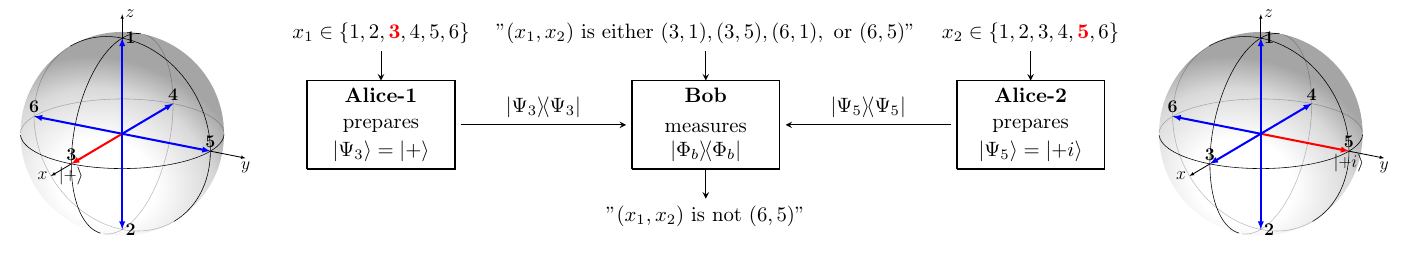}
    \caption{Each Alice receives a number between 1 and 6. Given the input, they prepare one of the eigenstates of the three Pauli operators and send it to Bob. Here, $x_1=3$ and $x_2=5$, so Alice-1 sends the state $\ket{+}$ and Alice-2 the state $\ket{+i}$. Bob receives a promise about the inputs of Alice-1 and Alice-2. For instance, the referee tells him that $x_1$ is either $3$ or $6$ and $x_2$ is either $1$ or $5$. They win if Bob can exclude one of the possible combinations. So he is supposed to reply either ``$(x_1,x_2)$ is not $(3,1)$'' or ``$(x_1,x_2)$ is not $(6,1)$'' or ``$(x_1,x_2)$ is not $(6,5)$''. We show that the task can be solved perfectly with the presented qubit strategy but requires a classical message of more than two bits per qubit.}
    \label{figure1}
\end{figure*}

\section{Preliminaries}

In this work, we consider a star-shaped network in which several parties, called Alice-$i$, can send a quantum system to a central node called Bob (see Fig.~\ref{figuresetup}). Bob receives all states and can measure them together with a joint measurement. The quantum states that Alice can send are formally described as positive operators $\rho \in L(\mathbb{C}^d)$ with $\Tr[\rho]=1$. In this work, we focus on pure states $\ket{\Psi}\in \mathbb{C}^d$, where $\rho=\ketbra{\Psi}$. The measurements that Bob can perform are in general positive operator-valued measurements (POVMs) $M_{b|y}\geq 0$ with $\sum_b M_{b|y}=\id_d$. In the considered scenario, the resulting correlations become (due to Born's rule):
\begin{align}
    p_Q(b|x_1, x_2, ..., x_n, y)=\Tr[M_{b|y} \ (\rho_{x_1}\otimes\rho_{x_2}\otimes ... \otimes \rho_{x_n})] \, .
\end{align}
Throughout this work, we assume the parties do not share additional entanglement and we assume that the Alices only communicate with Bob and not among each other.

In the analogous classical scenario, all parties can send a classical message to the central node. In general, the parties can share some global randomness $\lambda$, and the classical message that each Alice sends to Bob may depend on her input and on $\lambda$, i.e.\ $p_{A_i}(c_i|x_i, \lambda)$. Bob receives all messages $c_i$ and produces an outcome according to the distribution $p_B(b|c_1, c_2, \ldots, c_n, y, \lambda)$. In total, the outcome $b$ obeys:
\begin{align}
\begin{split}
    p_C&(b|x_1, x_2, ..., x_n, y)=\\
    &\sum_{\lambda, c_1, c_2, ..., c_n } \left(\prod_{i=1}^n p_{A_i}(c_i|x_i, \lambda) \right)\cdot p_B(b|c_1, c_2, ..., c_n, y, \lambda)
\end{split}
\end{align}
In this work, we are interested in how much classical communication is (at least) required to substitute a quantum system of a given dimension. It is known that in the case of $n=1$ a qubit can be replaced by a classical message of two bits and shared randomness~\cite{tonerbacon2003, Renner2023}. Here we show that the communication cost must scale with both the dimension and the number of involved parties.

A key concept used throughout this work is that of \emph{antidistinguishability}. A set of states $\{\ket{\Psi_1}, \ket{\Psi_2}, \ldots, \ket{\Psi_k}\}$ is called antidistinguishable if there exists a POVM $\{M_1, M_2, \ldots, M_k\}$ such that $\Tr[M_i \ketbra{\Psi_i}]=0$ for all $i$. In other words, when measuring such a set, each outcome $i$ certifies that the state was \emph{not} $\ket{\Psi_i}$. 

\section{The task}

In order to separate the classical and quantum regimes, we introduce the following exclusion task:
\begin{task}\label{task}
    Each Alice (labelled by index $i$) receives an input $x_i\in \{1, 2, \ldots, m\}$. At the same time, Bob receives an input $(\tilde{x}^0_i, \tilde{x}^1_i)$ (with $\tilde{x}^0_i\neq \tilde{x}^1_i$) for each Alice, and it is promised that either $\tilde{x}^0_i=x_i$ or $\tilde{x}^1_i=x_i$. Finally, Bob is asked to output one combination $(b_1, b_2, \dots, b_n)$ such that either $b_i=\tilde{x}^0_i$ or $b_i=\tilde{x}^1_i$ and they win if and only if $(b_1, b_2, \dots, b_n)\neq (x_1, x_2, \dots, x_n)$, meaning that $b_i \neq x_i$ for at least one $i$.
\end{task}
Although the task can in principle be defined for any $m$, we will focus on those instances in which the task can be won perfectly with a quantum strategy. More precisely, we denote by $\bar{m}(n,d)$ the maximum value such that a perfect quantum strategy for $n$ parties sending $d$-dimensional quantum states exists.
%It is useful to write the winning probability in the following way:

%Here, $m$ is a number that depends on the number of involved parties and the dimension of the quantum system sent, hence we write . We will show that this task can be solved perfectly if each Alice sends a quantum $d$-level system to Bob, but it cannot be solved if each Alice sends fewer than $\log_2(m)$ classical bits (even in the presence of shared randomness). The winning probability of a strategy is defined as:
%\begin{align}
%    p_{\mathrm{win}}=\sum_{x_i, y} \Bigl(1-p\Bigl(\bigwedge_{i=1}^{n}(b_i=x_i)\Bigr)\Bigr). \label{probwin}
%\end{align}

%(Note to myself: Maybe example first and this part afterwards.)
The connection to antidistinguishability is the following. Given Bob's input $(\tilde{x}^0_i, \tilde{x}^1_i)$ for each Alice $i$, Bob knows that the state of the $n$ quantum systems is one of the $2^n$ possible tensor product states:
\begin{align}
    \ket*{\Psi_{\vec{r}}} := \ket*{\Psi_{\tilde{x}^{r_1}_1}} \otimes \ket*{\Psi_{\tilde{x}^{r_2}_2}} \otimes \cdots \otimes \ket*{\Psi_{\tilde{x}^{r_n}_n}}, \quad \vec{r} \in \{0,1\}^n.
\end{align}
The task asks Bob to output some $\vec{r}$ such that $\ket*{\Psi_{\vec{r}}}$ was definitely \emph{not} the prepared state. This is precisely the definition of antidistinguishability: if the set $\{\ket*{\Psi_{\vec{r}}}\}_{\vec{r}}$ is antidistinguishable, then for every possible input $(x_1,\ldots,x_n)$ there exists a measurement outcome that certifies which combination was not prepared. If Bob outputs that outcome, they can solve Task~\ref{task} perfectly.
%Conversely, if the set is not antidistinguishable for some input, there is no measurement that can exclude any single combination with certainty.
Hence, the task can be solved perfectly if for every possible input, the set of $2^n$ states $\{\ket*{\Psi_{\vec{r}}}\}_{\vec{r}}$ is antidistinguishable.

\section{Simplest example}

We think it is best to explain the task with an example and discuss the general case afterwards. Suppose there are two Alices ($n=2$) that are both allowed to send a qubit message to Bob ($d=2$). We choose $m=6$, so each Alice receives a number $x_i$ between $1$ and $6$. Given the input, Alice-$i$ prepares one of the six states (see also Fig.~\ref{figure1}):
\begin{align*}
&\ket{0}\ (x=1),\quad \ket{1}\ (x=2),\quad \ket{+}\ (x=3),\\
&\ket{-}\ (x=4),\quad \ket{+i}\ (x=5),\quad \ket{-i}\ (x=6),
\end{align*}
where $\ket{\pm}=(\ket{0}\pm\ket{1})/\sqrt{2}$ and $\ket{\pm i}=(\ket{0}\pm i\ket{1})/\sqrt{2}$. Alice-2 follows the same strategy.

Suppose for instance that the first Alice gets $x_1=3$ and sends $\ket{+}$, while the second Alice receives $x_2=1$ and sends $\ket{0}$. Suppose further that Bob is told that $x_1 \in\{1,3\}$ and $x_2\in\{1,3\}$, so the pair $(x_1,x_2)$ is either $(1,1),(1,3),(3,1)$, or $(3,3)$. His goal is to output one combination that is definitely not $(x_1,x_2)$, i.e.\ any combination except $(3,1)$ is a valid answer. In the quantum strategy, Bob can achieve this by measuring the two qubits together in the entangled basis:
\begin{align}
    \ket{\Phi_1}&=(\ket{01}+\ket{10})/\sqrt{2}\\
    \ket{\Phi_2}&=(\ket{0-}+\ket{1+})/\sqrt{2}\\
    \ket{\Phi_3}&=(\ket{+1}+\ket{-0})/\sqrt{2}\\
    \ket{\Phi_4}&=(\ket{+-}+\ket{-+})/\sqrt{2} \, .
\end{align}
This basis coincides with the one used in the PBR argument~\cite{PBR2012}. One can verify that $\sum_i \ketbra{\Phi_i}=\id_4$ and that:
\begin{align}
    \braket{\Phi_1}{00}=\braket{\Phi_2}{0+}=
    \braket{\Phi_3}{+0}=\braket{\Phi_4}{++}=0 \, .
\end{align}
Hence, if Bob's measurement outcome is $\ket{\Phi_1}$, he knows the preparation was certainly not $\ket{00}$ and therefore $(x_1,x_2)\neq (1,1)$. Hence, Bob can output $b=(1,1)$ to certainly win the task. Similarly, he outputs $b=(1,3)$ if the outcome is $\ket{\Phi_2}$, $b=(3,1)$ if $\ket{\Phi_3}$, and $b=(3,3)$ if $\ket{\Phi_4}$.

If Bob receives a different promise, he can always find another measurement basis to solve the task. If both possible states of one Alice subtend a $90^\circ$ angle on the Bloch sphere, he first applies local unitaries mapping those two states to $\ket{0}$ and $\ket{+}$ respectively, and then performs the same entangled measurement as above. If the two possible states for one Alice-$i$ are orthogonal, he can simply measure that qubit in the corresponding basis. Doing so, he learns the value of $x_i$ and can simply output any combination with $b_i\neq x_i$.

\section{Classical strategies}

The same task cannot be solved with a classical message of two bits from each Alice. Suppose each Alice is allowed to send a message to Bob that does not allow her to encode her input perfectly (e.g.\ two classical bits). Then each Alice-$i$ necessarily sends the same message $c_i$ for at least two different inputs. If Bob's promise is exactly that $x_i$ is one of those two values, then the message $c_i$ carries no information to distinguish them.

To give a concrete example, suppose each Alice sends $c=00$ for $x=1$; $c=01$ for $x=2$; $c=10$ for $x\in\{3,4\}$; and $c=11$ for $x\in\{5,6\}$. If Bob is told that $(x_1,x_2)\in\{(3,5),(3,6),(4,5),(4,6)\}$, he receives the same messages $c_1=10$ and $c_2=11$ for all four combinations and cannot exclude any. The same logic applies to an arbitrary number of parties: if no Alice is allowed to send a message of at least $m$ symbols, each Alice must send the same message for at least two of her inputs, and there is always a promise for which Bob cannot exclude any combination. Hence, to solve the task perfectly with a classical message, at least one Alice must send a message of $m$ classical symbols.

\begin{theorem}\label{thm:classical}
    If each party sends a classical message of fewer than $m$ symbols, then Task~\ref{task} cannot be won with certainty, even in the presence of shared randomness.
\end{theorem}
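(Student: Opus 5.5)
The plan is to reduce to deterministic strategies and then use a pigeonhole argument for each Alice, followed by an averaging argument over the promise.

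\emph{Step 1: decompose the strategy.} Any classical strategy with shared randomness is a convex combination, indexed by $\lambda$, of strategies with local randomness only. Each Alice's stochastic encoding $p_{A_i}(c_i|x_i,\lambda)$ can be written as a mixture of deterministic functions $f_i:\{1,\dots,m\}\to\{1,\dots,k_i\}$, where $k_i<m$. Absorbing these mixtures into $\lambda$, the overall strategy becomes a convex combination of deterministic encodings $(f_1,\dots,f_n)$ together with a possibly stochastic decoder for Bob.

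\emph{Step 2: fix the promise.} Perfect winning requires that, for every input and every promise, the failure probability is zero. Since a convex combination of nonnegative failure probabilities vanishes only if every term does, it is enough to show the following: for each fixed tuple $(f_1,\dots,f_n)$, there is an admissible input–promise pair on which Bob fails with positive probability. The subtlety is that the bad promise depends on the $f_i$, and hence on $\lambda$. I would handle this by fixing one $\lambda$ with positive weight and working with its $(f_1,\dots,f_n)$ and its decoder. For that branch I construct a bad promise. If the failure probability is positive for that $\lambda$ and this promise, then the total failure probability for the same promise is also positive, because the weight of $\lambda$ is positive. Choosing the promise depending on $\lambda$ is therefore legitimate: it exhibits one input on which the overall strategy errs.

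\emph{Step 3: the pigeonhole argument and the contradiction.} Since $k_i<m$, each $f_i$ is non-injective, so there exist $a_i\neq a'_i$ with $f_i(a_i)=f_i(a'_i)$. Give Bob the promise $(\tilde x_i^0,\tilde x_i^1)=(a_i,a'_i)$ for every $i$. For each of the $2^n$ admissible inputs $\vec x\in\prod_i\{a_i,a'_i\}$, Bob receives the identical message tuple $(f_1(a_1),\dots,f_n(a_n))$. Consequently his output distribution $q(\vec b)$ over the $2^n$ allowed combinations is the same for all of these inputs. For input $\vec x$ the failure probability is $q(\vec x)$. Summing over all $2^n$ inputs gives $\sum_{\vec x}q(\vec x)=1$, so some input $\vec x$ satisfies $q(\vec x)\ge 2^{-n}>0$. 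Combined with Step 2, this shows the task cannot be won with certainty.

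The main obstacle is the one treated in Step 2: the bad promise depends on the hidden variable, so the proof must quantify correctly, with a single input on which the overall strategy fails. The argument above resolves this by committing to one positive-weight $\lambda$. The rest, namely the decomposition into deterministic encoders and the pigeonhole step, is routine.
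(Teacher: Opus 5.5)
Your proof is correct and follows essentially the paper's route: absorb the randomness into $\lambda$, use the pigeonhole principle to find a colliding pair $(a_i,a_i')$ for each Alice, take exactly these pairs as Bob's promise, and note that Bob then receives identical messages on all $2^n$ admissible inputs. The only difference is the final bookkeeping. The paper makes Bob deterministic and argues that a perfect strategy would need a single $\lambda$ winning on all finitely many input/promise pairs, whereas you keep Bob stochastic and fix one $\lambda$ of positive weight $w_\lambda$ to get failure probability at least $w_\lambda 2^{-n}$ on one input. That step is fine for the paper's discrete $\lambda$; for continuous shared randomness you would instead condition on one of the finitely many encoder tuples $(f_1,\dots,f_n)$, which must carry positive probability.
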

A detailed proof is given in Appendix~\ref{classicalstrategies}, where we also analyse classical strategies quantitatively and compute winning probabilities for a message of a given length. We do not want to hide the fact that although a large classical message is required for a \emph{perfect} solution, a shorter classical message can already perform quite well for this task. Even in the extreme case where no Alice sends any information, they can win with probability $1-2^{-n}$ if Bob simply guesses a random outcome.

\section{More qubits and larger dimensions}

%We now determine under which circumstances the task can be solved perfectly with a quantum strategy. 
We can now describe a general quantum strategy that is capable of solving the introduced task perfectly. Suppose there are $n$ Alices and each receives an input $x_i\in \{1,2,\ldots,m\}$. Given that input, Alice-$i$ prepares a $d$-dimensional state $\ket{\Psi_{x_i}}$ and sends it to Bob. For reasons that will become clear below, these states should satisfy:
\begin{align}
    \forall\, j\neq k\in \{1,2,\ldots,m\}:\quad |\braket{\Psi_{j}}{\Psi_{k}}|\leq f(n), \label{sphericalcode}
\end{align}
for a given $f(n)$ that will be determined soon. This is known as a \emph{complex spherical code}~\cite{DelsarteGoethalsSeidel1977, ConwaySloane1999, Roy2014}. We will see below that if, in a given dimension $d$, $m$ different states satisfying~\eqref{sphericalcode} can be found, they can solve Task~\ref{task} perfectly. To see this, note that Bob learns that the input of Alice-$i$ is either $\tilde{x}^0_i$ or $\tilde{x}^1_i$. Hence, he knows that the combined system of the states he received is in one of the $2^n$ states
\begin{align}
    \ket*{\Psi_{\vec{r}}} := \ket*{\Psi_{\tilde{x}^{r_1}_1}} \otimes \ket*{\Psi_{\tilde{x}^{r_2}_2}} \otimes \cdots \otimes \ket*{\Psi_{\tilde{x}^{r_n}_n}}\, , \quad \vec{r} \in \{0,1\}^n \, . \label{eq:prodstates}
\end{align}
His task is to output one combination that was definitely not the prepared one, which is equivalent to antidistinguishing these states. 

It can be established that for a certain value of $f(n)$ the set of $2^n$ states $\{\ket*{\Psi_{\vec{r}}}\}$ are in fact always antidistinguishable. A simple but weaker bound can be shown using a Theorem by Johnston, Russo, and Sikora in Ref.~\cite{Johnston2025} (see Appendix~\ref{app:johnston}). Here, we follow the techniques of Ref.~\cite{PBR2012} and Ref.~\cite{Jain2014} instead. In that case, we define $f(n)$ as:
\begin{align}
    f(n):=\frac{1-(\sqrt[n]{2}-1)^2}{1+(\sqrt[n]{2}-1)^2} \, . \label{deffunction}
\end{align}
We further define the critical angle $\alpha^c_n$ via $\cos(2\alpha^c_n):=f(n)$. The strategy to antidistinguish those states, and therefore the strategy to win the task, goes as follows. Given their input $x_i$, each Alice prepares the state $\ket{\Psi_{x_i}}$ and sends it to Bob. Bob learns that the input of Alice-$i$ is either $\tilde{x}^0_i$ or $\tilde{x}^1_i$. Hence, he knows that the state he holds is either $\ket*{\Psi_{\tilde{x}^0_i}}$ or $\ket*{\Psi_{\tilde{x}^1_i}}$. In a sequence of steps (see Appendix~\ref{quantumstrategies}), he applies a transformation to map the two states into:
\begin{align}
    \ket*{\Psi_{\tilde{x}^0_i}}&\mapsto \ket{\psi_0}=\cos{(\alpha^c_n)}\ket{0}+\sin{(\alpha^c_n)}\ket{1}\\
    \ket*{\Psi_{\tilde{x}^1_i}}&\mapsto \ket{\psi_1}=\cos{(\alpha^c_n)}\ket{0}-\sin{(\alpha^c_n)}\ket{1}
\end{align}
We show that such a transformation is possible whenever $|\braket*{\Psi_{\tilde{x}^0_i}}{\Psi_{\tilde{x}^1_i}}|\leq f(n)$ which is guaranteed by~\eqref{sphericalcode}.

He applies the same procedure to the system received from each Alice. Afterwards, Bob knows that the total state is one of the $2^n$ possible states labelled by $\vec{r}\in\{0,1\}^n$:
\begin{align}
    \ket{\Omega_{\vec{r}}}:=&\ket{\psi_{r_1}}\otimes \ket{\psi_{r_2}} \otimes ... \otimes \ket{\psi_{r_n}}\\
    =&\sum_{\vec{z}} (-1)^{\vec{r}\cdot \vec{z}}\cos(\alpha^c_n)^{(n-\vec{z}\cdot \vec{1})}\sin(\alpha^c_n)^{\vec{z}\cdot \vec{1}}\ket{\vec{z}} \, .
\end{align}
where $\vec{z}\in\{0,1\}^n$, $\ket{\vec{z}}:=\ket{z_1}\otimes \dots \otimes \ket{z_n}$, and $\vec{1}:=(1,1,...,1)$. Crucially, this set of $2^n$ states is antidistinguishable: there exists a measurement such that outcome $\vec{r}$ occurs with zero probability for the preparation $\ket{\Omega_{\vec{r}}}$.
%Since Bob knows that the actual preparation is one of these $2^n$ states, antidistinguishability is exactly what is needed to solve Task~\ref{task} perfectly. In fact, each outcome tells Bob which combination was definitely \emph{not} prepared.
Following the original PBR argument~\cite{PBR2012} or Ref.~\cite{Jain2014}, the antidistinguishing measurement basis is:
\begin{align}
    \ket*{\Phi_{\vec{r}}}=\frac{1}{\sqrt{2^n}}\left( \ket*{\vec{0}} -\sum_{\vec{z}\neq \vec{0}} (-1)^{\vec{r}\cdot \vec{z}} \ket{\vec{z}} \right).
\end{align}
A direct calculation (see App.~\ref{quantumstrategies}) shows that for all $\vec{r}$:
\begin{align}
    \sqrt{2^n}\braket{\Omega_{\vec{r}}}{\Phi_{\vec{r}}}=\cos{(\alpha^c_n)}^n(2-(1+\tan{(\alpha^c_n)})^{n})=0 \, ,
\end{align}
where we use that $f(n)=\cos(2\alpha^c_n)$ is chosen such that $\tan{(\alpha^c_n)}=\sqrt[n]{2}-1$. Hence, when Bob performs this measurement and receives outcome $\ket{\Phi_{\vec{r}}}$, he knows for sure that the preparation was not $\ket{\Omega_{\vec{r}}}$ and can solve the task perfectly. This establishes the following:
\begin{theorem}\label{thm:quantum}
    Let $n$ be the number of parties. If there exist $m$ $d$-dimensional quantum states $\{\ket{\Psi_1},\ldots,\ket{\Psi_m}\}$ satisfying 
    $$|\braket{\Psi_j}{\Psi_k}|\leq \frac{1-(\sqrt[n]{2}-1)^2}{1+(\sqrt[n]{2}-1)^2}$$ for all $j\neq k$, then Task~\ref{task} can be solved perfectly by a quantum strategy in which each Alice sends a $d$-dimensional quantum system.
\end{theorem}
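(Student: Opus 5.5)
The plan is to handle each Alice's pair separately with a local operation that maps her two candidate states onto the symmetric pair $\ket{\psi_0},\ket{\psi_1}$, and then to use the PBR-type basis $\{\ket{\Phi_{\vec{r}}}\}$ to antidistinguish the resulting product states. Each Alice simply sends $\ket{\Psi_{x_i}}$.

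\emph{Step 1: reduction to a symmetric pair with the right overlap.} For Alice-$i$, write $\ket{a}=\ket*{\Psi_{\tilde{x}^0_i}}$ and $\ket{b}=\ket*{\Psi_{\tilde{x}^1_i}}$. By hypothesis $|\braket{a}{b}|\le f(n)=\cos(2\alpha^c_n)$.

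First, absorb a global phase into $\ket{b}$ so that $\braket{a}{b}=c$ is real with $0\le c\le f(n)$. Phases do not affect antidistinguishability of the product states, since each product state only acquires an overall phase.

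Next, reduce the overlap from $c$ to exactly $f(n)$. The tool is a two-outcome instrument on the span of $\ket{a},\ket{b}$. Its ``success'' Kraus operator $K$ satisfies
\begin{align*}
K\ket{a}&=\sqrt{p}\,\ket{\psi_0}, & K\ket{b}&=\sqrt{p}\,\ket{\psi_1},
\end{align*}
with $\braket{\psi_0}{\psi_1}=\cos^2\alpha^c_n-\sin^2\alpha^c_n=f(n)$. A linear map sending a Gram matrix with overlap $c$ to $p$ times a Gram matrix with overlap $f\ge c$ exists as a contraction for suitable $p>0$. This is the standard unambiguous-state-transformation fact: $\mathrm{diag}$-type filtering increases overlap, and the requirement $G_a\ge p\,G_\psi$ holds for small $p$.

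The ``failure'' branch requires more care. On failure, Bob should still be able to exclude something. The cleanest route is to choose the complementary Kraus operator so that failure projects onto a vector orthogonal to one of the states, which would reveal $x_i$. I expect the paper's appendix to take a simpler path. When $c<f$, Bob can also append an ancilla: map $\ket{a}\mapsto\ket{a}\otimes\ket{e_0}$ and $\ket{b}\mapsto\ket{b}\otimes\ket{e_1}$ with $\braket{e_0}{e_1}=f/c\le 1$, unless $c=0$, in which case he measures directly and learns $x_i$. Equivalently, Bob can go the other way and lower the overlap. I will use a unitary (isometry) on the two-dimensional span plus an ancilla. The mapping $\ket{a}\mapsto\ket{\psi_0}$, $\ket{b}\mapsto\ket{\psi_1}$ needs equal overlaps, and adding uncorrelated ancilla components can only decrease the overlap. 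So I instead fix the target pair to have overlap $c\le f$ by using a smaller angle $\alpha\ge\alpha^c_n$ per party.

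This is where I reconsider: rather than forcing overlap exactly $f$, note that the antidistinguishing condition
$$\sqrt{2^n}\braket{\Omega_{\vec{r}}}{\Phi_{\vec{r}}}=\cos^n\alpha\,\bigl(2-(1+\tan\alpha)^n\bigr)$$
is satisfied with equality only at $\alpha^c_n$. So I will use the isometry-plus-ancilla approach in the opposite direction. Write $\ket{\psi_{0/1}}$ with an extra orthogonal ``flag'' component $\sqrt{1-q}\,\ket{2}\ket{0/1}$. This lets the overlap-$c$ pair embed isometrically as $\sqrt{q}\,\ket{\psi_{0/1}}+\sqrt{1-q}\,\ket{2}\ket{r}$ with $qf=c$. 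Bob then measures whether the flag subspace is occupied. If it is, he reads $r$ and learns $x_i$ exactly. If it is not, the post-measurement state is precisely $\ket{\psi_{r_i}}$. Because a unitary extension of an isometry preserving Gram matrices always exists, this step is rigorous.

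\emph{Step 2: combine the parties.} Bob performs this procedure on every system. Any party whose flag fired reveals $x_i$, and Bob outputs some $b_i\neq x_i$, which wins. If no flag fired, the remaining systems are in $\ket{\Omega_{\vec{r}}}$ restricted to the unrevealed parties, say $k\le n$ of them.

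Here is a subtlety: the critical angle depends on the number of parties. I will therefore always embed into $n$-party targets and apply the $n$-qubit measurement. For revealed parties, Bob prepares a fresh $\ket{\psi_{s}}$ with $s$ chosen so that the corresponding $b_i$ is not $x_i$. Any outcome $\vec{r}$ of the measurement then either excludes correctly or, on revealed coordinates, is automatically wrong. Simpler still, if any party is revealed, Bob wins directly.

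Finally, verify orthonormality of $\ket{\Phi_{\vec{r}}}$, which follows from $\sum_{\vec{z}}(-1)^{(\vec{r}\oplus\vec{r}')\cdot\vec{z}}=2^n\delta_{\vec{r}\vec{r}'}$ together with the sign flip on $\vec{0}$. Then compute the displayed overlap using $\sum_{\vec{z}}\tan^{|\vec{z}|}\alpha=(1+\tan\alpha)^n$. The choice $\tan\alpha^c_n=\sqrt[n]{2}-1$, equivalent to $\cos 2\alpha^c_n=f(n)$, makes it vanish.

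\emph{Main obstacle.} The main obstacle is Step 1, namely mapping an arbitrary pair with overlap at most $f(n)$ onto the exact critical pair without destroying the zero-probability structure. I would first try the flagged isometry described above and check that it produces a valid unitary on the relevant span.
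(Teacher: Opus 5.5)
Your final argument is correct, but the angle-reduction step takes a different route from the paper's. The paper's reduction is deterministic. After rotating the pair to $\cos\alpha\ket{0}\pm\sin\alpha\ket{1}$, Bob appends an ancilla and applies an explicit unitary $\tilde{U}$ that outputs the product $(\cos\beta\ket{0}\pm\sin\beta\ket{1})\otimes\ket{\psi_{0/1}}$, with $\cos 2\beta=\cos 2\alpha/f(n)$. This factorises the overlap multiplicatively as $c=(c/f)\cdot f$. Bob then discards the ancilla, so every run ends with him holding exactly $\ket{\Omega_{\vec{r}}}$.

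You instead split the overlap additively, $c=qf+(1-q)\cdot 0$, through the embedding $\sqrt{q}\,\ket{\psi_r}+\sqrt{1-q}\,\ket{2}\ket{r}$. This is a probabilistic state conversion whose failure branch is itself conclusive: the flag measurement either reveals $x_i$ outright or leaves exactly $\ket{\psi_r}$. The choice $q=c/f$ matters here. A maximal-probability conversion would have $p=(1+c)/(1+f)$, and its failure states would be identical up to sign, hence useless.

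Both routes rest on the fact that equal Gram matrices imply an isometry. Yours spares you writing down and verifying $\tilde{U}$, since only $qf=c$ needs checking. The paper's needs no branching and no intermediate measurement.

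The ancilla idea you abandoned does work if the ancilla is attached to the targets rather than the inputs. The pair $\ket{e_r}\otimes\ket{\psi_r}$ with $\braket{e_0}{e_1}=c/f\le 1$ has overlap $c$, and that is precisely the paper's construction. As you wrote it, $\braket{e_0}{e_1}=f/c\le 1$ is false whenever $c<f$.

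In a write-up, drop three detours: the filtering paragraph, the $f/c$ claim, and the ``fresh $\ket{\psi_s}$'' variant. That last one only works if $s_i$ is set equal to the revealed $r_i$, not chosen so that $b_i\neq x_i$. The clean statement is: if any flag fires, Bob wins; otherwise all $n$ systems are in $\ket{\Omega_{\vec{r}}}$. So the worry about $k<n$ unrevealed parties never arises.
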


\section{Spherical codes}

We have shown that the communication complexity of joint measurements on a $d$-level quantum system is connected to the largest complex spherical code satisfying~\eqref{sphericalcode}. We use this to determine some lower bounds on $\bar{m}(n,d)$. It is useful to first note some values of $f(n)$. For $n=1$ we obtain $f(1)=0$, so the states must be mutually orthogonal and $\bar{m}(1,d)=d$. For $n=1$, the task matches the one introduced in Ref.~\cite{Naik2022}. For $n=2$ we obtain $f(2)=1/\sqrt{2}$.

For qubits ($d=2$), pure states can be represented via the Bloch sphere: $\ketbra{\Psi}=(\id + \vec{v}\cdot \vec{\sigma})/2$ where $\vec{v}\in \mathbb{R}^3$ denotes the Bloch vector ($|\vec{v}|=1$ for pure states) and $\vec{\sigma}=(\sigma_x,\sigma_y,\sigma_z)$ denote the standard Pauli matrices. The inner product becomes $|\braket{\Psi_k}{\Psi_j}|^2=(1+\vec{v}_k\cdot\vec{v}_j)/2$, so condition~\eqref{sphericalcode} becomes:
\begin{align}
    \vec{v}_k\cdot\vec{v}_j\leq 2\cos(2\alpha^c_n)^2-1=\cos(4\alpha^c_n). \label{sphericalcodequbit}
\end{align}
For $n=2$, this requires $\vec{v}_k\cdot\vec{v}_j\leq 0$, which is realised by the six unit vectors $\pm \vec{e}_x,\pm \vec{e}_y,\pm \vec{e}_z$ used in the first example (see Fig.~\ref{figure1}). For $n=3$, pairs of Bloch vectors must subtend an angle of at least $58.3^\circ$. Ref.~\cite{Musin2010} shows that 12 such vectors exist (e.g.\ the vertices of a regular icosahedron) but 13 do not, so joint measurements on three qubits cannot be simulated with fewer than $\log_2(12)>3$ bits per qubit. For $n=4$, 24 points suffice~\cite{Robinson1961}. Further values are listed in Ref.~\cite{Sloanlist, Listspherecode} (see also Tab.~\ref{tab:lowerbounds}).

\begin{table}[h]
    \centering
\begin{tabular}{c||c|c|c|c|c|c}
$n$   & 1 & 2 & 3 & 4 & 5 & 6 \\\hline\hline
$f(n)$ & $0$ & $1/\sqrt{2}$ & $\approx0.873$ & $\approx0.931$ & $\approx0.957$ & $\approx0.970$ \\\hline\hline
$4\alpha_n^c$ & $180^\circ$ & $90^\circ$ & $\approx58.3^\circ$ & $\approx42.9^\circ$ & $\approx33.8^\circ$ & $\approx27.9^\circ$ \\\hline \hline
$\bar{m}(n,d=2)$ & 2 & 6 & 12 & 24 & 38 & 56 \\
\end{tabular}
    \caption{Values of $f(n)$, the critical Bloch angle $4\alpha_n^c$ (note that $\cos(2\alpha^c_n):=f(n)$), and the optimal number of codewords $\bar{m}(n,d=2)$ (from Ref.~\cite{Sloanlist, Listspherecode}).
    %Values for $d\geq 3$ depend on complex spherical codes that are not fully characterised in general; see the lower bounds in the text.
    }
    \label{tab:lowerbounds}
\end{table}

\subsection{Spherical codes in higher dimensions}

For $d\geq 3$, one can derive a lower bound on $\bar{m}(n,d)$ via a volumetric argument (see also Ref.~\cite{Havlicek2020, Montina2011}). Starting from a random state $\ket{\Psi_1}$, we iteratively exclude all states $\ket{\Psi}$ with $|\braket{\Psi_k}{\Psi}|\geq f(n)$ and pick the next state from those remaining, until no more states can be added. By construction, the resulting set is a valid complex spherical code, and its size is at least the ratio of the total sphere volume to the volume of a spherical cap with angle $2\alpha^c_n$. In Appendix~\ref{appspherecode}, we show:
\begin{align}
    \frac{S_{2\alpha^c_n}}{S}=(1-\cos{(2\alpha^c_n)}^2)^{d-1},
\end{align}
from which we obtain:
\begin{align}
    \bar{m}(n,d)\geq \left( \frac{1+(\sqrt[n]{2}-1)^2}{2(\sqrt[n]{2}-1)}\right)^{2(d-1)}\, .
\end{align}
For $n=2$ this gives $\bar{m}(2,d)\geq 2^{d-1}$, an exponential scaling with dimension. It should be stressed that this can be achieved with only four-outcome measurements (or $2^n$ in general), while most communication tasks require measurements with more outcomes (Refs.~\cite{Montina2011,Havlicek2020} are exceptions). For $n=3$, one obtains $\bar{m}(3,d)\geq 4^{d-1}$. For general $n$, a simpler but weaker bound (derived in Appendix~\ref{appspherecode}) is:
\begin{align}
    \bar{m}(n,d)\geq \left(\frac{n^2}{4}\right)^{d-1}.
\end{align}
In particular, $\bar{m}(n,d)\geq n^{d-1}$: for $n=2$ and $n=3$ this follows from the explicit bounds $2^{d-1}$ and $4^{d-1}$ above, while for $n\geq 4$ we have $n^2/4\geq n$. Combined with Theorems~\ref{thm:classical} and~\ref{thm:quantum}, this shows that any classical simulation with less than $n^{d-1}$ (or equivalently $(d-1)\log_2(n)$ classical bits) per party is insufficient to simulate a joint measurement on the $n$ $d$-dimensional systems exactly.

\section{Conclusion}
 
We have shown that the classical communication cost of simulating joint measurements on $n$ quantum $d$-level systems grows with both $n$ and $d$, with the required number of messages growing exponentially in $d$ already for $n=2$. In particular, no finite classical description of a qubit can reproduce the statistics of joint measurements on arbitrarily many qubits, in sharp contrast to the prepare-and-measure case, where a qubit is always simulable by two classical bits and shared randomness~\cite{tonerbacon2003, Renner2023}.
%This gap closes only if the parties share entanglement with Bob: remote state preparation then requires just two bits per party, recovering the single-qubit cost and highlighting entanglement as the resource responsible for the separation.
This shows that any classical model of the qubit that is simulable with a bounded message, for instance the Kochen - Specker model~\cite{KochenSpecker1967, Montina2011KSmodel} or the model of Ref.~\cite{Renner2023}, necessarily fails to reproduce the statistics of joint measurements on two or more qubits.
%Likewise, any model describing each qubit by a fixed number of classical bits, such as the quantum simulation logic of Ref.~\cite{Johansson2019}, cannot exactly reproduce general quantum circuits involving joint measurements, since the required number of bits must grow with the number of qubits.

Several open questions remain. Our results are not very robust to noise, since a smaller classical message can achieve already a quite high score for the task we introduce. Finding other tasks that avoid this problem is of particular interest.
%From a given perspective, this is unsurprising since a finite classical message can be used to approximate the Bloch sphere via some polytop and therefore simulate this task by using for instance similar techniques as in Ref.~\cite{Bowles2015}.
It would be also worth exploring whether a similar scaling persists when Bob has no input, where a single $d$-dimensional sender can already be perfectly simulated by $d$ classical symbols~\cite{FrenkelWeiner2015}, while two or three senders require more than one bit per qubit~\cite{Chakraborty2023}.\\

\section*{Acknowledgements}
We thank Antonio Ac\'in, Mir Alimuddin, Alexander Bernal, Edwin Peter Lobo, and Armin Tavakoli for insightful discussions and comments on the manuscript. We acknowledge financial support through the Juan de la Cierva postdoctoral fellowship (La ayuda JDC2024-055405-I, financiada por MICIU/AEI/10.13039/501100011033 y por el FSE+.), the Government of Spain (Severo Ochoa CEX2019-000910-S and FUNQIP), Fundaci\'o Cellex, Fundaci\'o Mir-Puig, Generalitat de Catalunya (CERCA program) and the European Union (NEQST 101080086).

%\nocite{apsrev42Control}
%\bibliographystyle{0_MTQ_apsrev4-2_corrected.bst}
\bibliography{bib.bib}

@ARTICLE{Buhrman10_review,
       author = {{Buhrman}, Harry and {Cleve}, Richard and {Massar}, Serge and {de Wolf}, Ronald},
        title = "{Nonlocality and communication complexity}",
      journal = {Reviews of Modern Physics},
         year = 2010,
        month = jan,
       volume = {82},
       number = {1},
        pages = {665-698},
          doi = {10.1103/RevModPhys.82.665},
archivePrefix = {arXiv},
       eprint = {0907.3584},
 primaryClass = {quant-ph},
       adsurl = {https://ui.adsabs.harvard.edu/abs/2010RvMP...82..665B}
}

@ARTICLE{PBR2012,
       author = {{Pusey}, Matthew F. and {Barrett}, Jonathan and {Rudolph}, Terry},
        title = "{On the reality of the quantum state}",
      journal = {Nature Physics},
         year = 2012,
        month = jun,
       volume = {8},
       number = {6},
        pages = {476-479},
          doi = {10.1038/nphys2309},
archivePrefix = {arXiv},
       eprint = {1111.3328},
 primaryClass = {quant-ph},
       adsurl = {https://ui.adsabs.harvard.edu/abs/2012NatPh...8..476P}
}

@ARTICLE{tonerbacon2003,
       author = {{Toner}, B.~F. and {Bacon}, D.},
        title = "{Communication Cost of Simulating Bell Correlations}",
      journal = {Physical Review Letters},
         year = 2003,
        month = oct,
       volume = {91},
       number = {18},
          eid = {187904},
        pages = {187904},
          doi = {10.1103/PhysRevLett.91.187904},
archivePrefix = {arXiv},
       eprint = {quant-ph/0304076},
 primaryClass = {quant-ph},
       adsurl = {https://ui.adsabs.harvard.edu/abs/2003PhRvL..91r7904T}
}

@ARTICLE{degorre2005,
       author = {{Degorre}, Julien and {Laplante}, Sophie and {Roland}, J{\'e}r{\'e}mie},
        title = "{Simulating quantum correlations as a distributed sampling problem}",
      journal = {Physical Review A},
         year = 2005,
        month = dec,
       volume = {72},
       number = {6},
          eid = {062314},
        pages = {062314},
          doi = {10.1103/PhysRevA.72.062314},
archivePrefix = {arXiv},
       eprint = {quant-ph/0507120},
 primaryClass = {quant-ph},
       adsurl = {https://ui.adsabs.harvard.edu/abs/2005PhRvA..72f2314D}
}

@ARTICLE{Brassard1999,
       author = {{Brassard}, Gilles and {Cleve}, Richard and {Tapp}, Alain},
        title = "{Cost of Exactly Simulating Quantum Entanglement with Classical Communication}",
      journal = {Physical Review Letters},
         year = 1999,
        month = aug,
       volume = {83},
       number = {9},
        pages = {1874-1877},
          doi = {10.1103/PhysRevLett.83.1874},
archivePrefix = {arXiv},
       eprint = {quant-ph/9901035},
 primaryClass = {quant-ph},
       adsurl = {https://ui.adsabs.harvard.edu/abs/1999PhRvL..83.1874B}
}

@article{Brassard2003,
    doi = {10.1023/A:1026009100467},
	year = {2003},
	month = {Nov},
	publisher = {Springer},
	volume = {33},
	number = {11},
	pages = {1593--1616},
	author = {Gilles Brassard},
	title = {Quantum Communication Complexity},
	journal = {Foundations of Physics}
}

@article{bennettdensecoding,
  title = {Communication via one- and two-particle operators on Einstein-Podolsky-Rosen states},
  author = {Bennett, Charles H. and Wiesner, Stephen J.},
  journal = {Physical Review Letters},
  volume = {69},
  issue = {20},
  pages = {2881--2884},
  numpages = {0},
  year = {1992},
  month = {Nov},
  publisher = {American Physical Society},
  doi = {10.1103/PhysRevLett.69.2881},
  url = {https://link.aps.org/doi/10.1103/PhysRevLett.69.2881}
}

@inproceedings{hiddenmatching,
author = {Bar-Yossef, Ziv and Jayram, T. S. and Kerenidis, Iordanis},
title = {Exponential Separation of Quantum and Classical One-Way Communication Complexity},
year = {2004},
isbn = {1581138520},
publisher = {Association for Computing Machinery},
address = {New York, NY, USA},
url = {https://doi.org/10.1145/1007352.1007379},
doi = {10.1145/1007352.1007379},
booktitle = {Proceedings of the Thirty-Sixth Annual ACM Symposium on Theory of Computing},
pages = {128–137},
numpages = {10},
location = {Chicago, IL, USA},
series = {STOC '04}
}

@article{WiesnerRAC1983,
author = {Wiesner, Stephen},
title = {Conjugate Coding},
year = {1983},
issue_date = {Winter-Spring 1983},
publisher = {Association for Computing Machinery},
address = {New York, NY, USA},
volume = {15},
number = {1},
issn = {0163-5700},
url = {https://doi.org/10.1145/1008908.1008920},
doi = {10.1145/1008908.1008920},
journal = {SIGACT News},
month = {Jan},
pages = {78–88},
numpages = {11}
}

@inproceedings{raz1999,
  title={Exponential separation of quantum and classical communication complexity},
  author={Raz, Ran},
  booktitle={Proceedings of the thirty-first annual ACM symposium on Theory of computing},
  pages={358-367},
  doi={https://doi.org/10.1145/301250.301343},
  year={1999}
}

@ARTICLE{Renner2023,
       author = {{Renner}, Martin J. and {Tavakoli}, Armin and {Quintino}, Marco T{\'u}lio},
        title = "{Classical Cost of Transmitting a Qubit}",
      journal = {Physical Review Letters},
         year = 2023,
        month = {March},
       volume = {130},
       number = {12},
          eid = {120801},
        pages = {120801},
          doi = {10.1103/PhysRevLett.130.120801},
archivePrefix = {arXiv},
       eprint = {2207.02244},
 primaryClass = {quant-ph},
       adsurl = {https://ui.adsabs.harvard.edu/abs/2023PhRvL.130l0801R}
}

@ARTICLE{Renner2023b,
       author = {{Renner}, Martin J. and {Quintino}, Marco T{\'u}lio},
        title = "{The minimal communication cost for simulating entangled qubits}",
      journal = {Quantum},
         year = 2023,
        month = oct,
       volume = {7},
        pages = {1149},
          doi = {10.22331/q-2023-10-24-1149},
archivePrefix = {arXiv},
       eprint = {2207.12457},
 primaryClass = {quant-ph},
       adsurl = {https://ui.adsabs.harvard.edu/abs/2023Quant...7.1149R}
}

@ARTICLE{Musin2010,
       author = {{Musin}, Oleg and {Tarasov}, Alexey},
        title = "{The strong thirteen spheres problem}",
      journal = {Discrete \& Computational Geometry },
         year = 2012,
        month = feb,
       volume = {48},
       number = {1},
        pages = {128–141},
          doi = {10.1007/s00454-011-9392-2},
archivePrefix = {arXiv},
       eprint = {1002.1439},
 primaryClass = {math.MG},
       adsurl = {https://ui.adsabs.harvard.edu/abs/2010arXiv1002.1439M}
}

@ARTICLE{Caves2002,
       author = {{Caves}, Carlton M. and {Fuchs}, Christopher A. and {Schack}, R{\"u}diger},
        title = "{Conditions for compatibility of quantum-state assignments}",
      journal = {Physical Review A},
         year = 2002,
        month = dec,
       volume = {66},
       number = {6},
          eid = {062111},
        pages = {062111},
          doi = {10.1103/PhysRevA.66.062111},
archivePrefix = {arXiv},
       eprint = {quant-ph/0206110},
}

@ARTICLE{Johnston2025,
       author = {{Johnston}, Nathaniel and {Russo}, Vincent and {Sikora}, Jamie},
        title = "{Tight bounds for antidistinguishability and circulant sets of pure quantum states}",
      journal = {Quantum},
         year = 2025,
        month = feb,
       volume = {9},
        pages = {1622},
          doi = {10.22331/q-2025-02-04-1622},
archivePrefix = {arXiv},
       eprint = {2311.17047},
 primaryClass = {quant-ph},
       adsurl = {https://ui.adsabs.harvard.edu/abs/2025Quant...9.1622J}
}

@ARTICLE{Russo2023,
       author = {{Russo}, Vincent and {Sikora}, Jamie},
        title = "{Inner products of pure states and their antidistinguishability}",
      journal = {Physical Review A},
         year = 2023,
        month = mar,
       volume = {107},
       number = {3},
          eid = {L030202},
        pages = {L030202},
          doi = {10.1103/PhysRevA.107.L030202},
archivePrefix = {arXiv},
       eprint = {2206.08313},
 primaryClass = {quant-ph},
       adsurl = {https://ui.adsabs.harvard.edu/abs/2023PhRvA.107c0202R}
}

@ARTICLE{Perry2015,
       author = {{Perry}, Christopher and {Jain}, Rahul and {Oppenheim}, Jonathan},
        title = "{Communication Tasks with Infinite Quantum-Classical Separation}",
      journal = {Physical Review Letters},
         year = 2015,
        month = jul,
       volume = {115},
       number = {3},
          eid = {030504},
        pages = {030504},
          doi = {10.1103/PhysRevLett.115.030504},
archivePrefix = {arXiv},
       eprint = {1407.8217},
 primaryClass = {quant-ph},
       adsurl = {https://ui.adsabs.harvard.edu/abs/2015PhRvL.115c0504P}
}

@ARTICLE{Jain2014,
       author = {{Bandyopadhyay}, Somshubhro and {Jain}, Rahul and {Oppenheim}, Jonathan and {Perry}, Christopher},
        title = "{Conclusive exclusion of quantum states}",
      journal = {Physical Review A},
         year = 2014,
        month = feb,
       volume = {89},
       number = {2},
          eid = {022336},
        pages = {022336},
          doi = {10.1103/PhysRevA.89.022336},
archivePrefix = {arXiv},
       eprint = {1306.4683},
 primaryClass = {quant-ph},
       adsurl = {https://ui.adsabs.harvard.edu/abs/2014PhRvA..89b2336B}
}

@ARTICLE{Heinosaari2019,
       author = {{Heinosaari}, Teiko and {Kerppo}, Oskari},
        title = "{Communication of partial ignorance with qubits}",
      journal = {Journal of Physics A Mathematical General},
         year = 2019,
        month = sep,
       volume = {52},
       number = {39},
          eid = {395301},
        pages = {395301},
          doi = {10.1088/1751-8121/ab3ae4},
archivePrefix = {arXiv},
       eprint = {1903.04899},
 primaryClass = {quant-ph},
       adsurl = {https://ui.adsabs.harvard.edu/abs/2019JPhA...52M5301H}
}

@ARTICLE{Heinosaari2018,
       author = {{Heinosaari}, Teiko and {Kerppo}, Oskari},
        title = "{Antidistinguishability of pure quantum states}",
      journal = {Journal of Physics A Mathematical General},
         year = 2018,
        month = sep,
       volume = {51},
       number = {36},
          eid = {365303},
        pages = {365303},
          doi = {10.1088/1751-8121/aad1fc},
archivePrefix = {arXiv},
       eprint = {1804.10457},
 primaryClass = {quant-ph},
       adsurl = {https://ui.adsabs.harvard.edu/abs/2018JPhA...51J5303H}
}

@article{KochenSpecker1967,
 ISSN = {00959057, 19435274},
 URL = {http://www.jstor.org/stable/24902153},
 author = {Simon Kochen and Ernst~P. Specker},
 journal = {Journal of Mathematics and Mechanics},
 number = {1},
 pages = {59--87},
 publisher = {Indiana University Mathematics Department},
 title = {The Problem of Hidden Variables in Quantum Mechanics},
 urldate = {2023-03-03},
 volume = {17},
 year = {1967},
 month = {July}
}

@ARTICLE{Montina2011KSmodel,
       author = {{Montina}, A.},
        title = "{Approximate simulation of entanglement with a linear cost of communication}",
      journal = {Physical Review A},
         year = 2011,
        month = oct,
       volume = {84},
       number = {4},
          eid = {042307},
        pages = {042307},
          doi = {10.1103/PhysRevA.84.042307},
archivePrefix = {arXiv},
       eprint = {1107.4647},
 primaryClass = {quant-ph},
       adsurl = {https://ui.adsabs.harvard.edu/abs/2011PhRvA..84d2307M}
}

@ARTICLE{Bae2025,
       author = {{Bae}, Joonwoo and {Flatt}, Kieran and {Heinosaari}, Teiko and {Kerppo}, Oskari and {Mohan}, Karthik and {Mu{\~n}oz-Moller}, Andr{\'e}s and {Rai}, Ashutosh},
        title = "{Random exclusion codes: Quantum advantages of single-shot communication}",
      journal = {Physical Review Research},
         year = 2026,
        month = feb,
       volume = {8},
       number = {1},
          eid = {013171},
        pages = {013171},
          doi = {10.1103/196j-lmzl},
archivePrefix = {arXiv},
       eprint = {2506.07701},
 primaryClass = {quant-ph},
       adsurl = {https://ui.adsabs.harvard.edu/abs/2026PhRvR...8a3171B}
}

@ARTICLE{Havlicek2020,
       author = {{Havl{\'\i}{\v{c}}ek}, Vojt{\v{e}}ch and {Barrett}, Jonathan},
        title = "{Simple communication complexity separation from quantum state antidistinguishability}",
      journal = {Physical Review Research},
         year = 2020,
        month = mar,
       volume = {2},
       number = {1},
          eid = {013326},
        pages = {013326},
          doi = {10.1103/PhysRevResearch.2.013326},
archivePrefix = {arXiv},
       eprint = {1911.01927},
 primaryClass = {quant-ph},
       adsurl = {https://ui.adsabs.harvard.edu/abs/2020PhRvR...2a3326H}
}

@article{Muller1959,
author = {Muller, Mervin E.},
title = {A note on a method for generating points uniformly on n-dimensional spheres},
year = {1959},
issue_date = {April 1959},
publisher = {Association for Computing Machinery},
address = {New York, NY, USA},
volume = {2},
number = {4},
issn = {0001-0782},
url = {https://doi.org/10.1145/377939.377946},
doi = {10.1145/377939.377946},
journal = {Commun. ACM},
month = apr,
pages = {19–20}
}

@ARTICLE{Montina2011,
       author = {{Montina}, Alberto},
        title = "{Communication cost of classically simulating a quantum channel with subsequent rank-1 projective measurement}",
      journal = {Physical Review A},
         year = 2011,
        month = dec,
       volume = {84},
       number = {6},
          eid = {060303},
        pages = {060303},
          doi = {10.1103/PhysRevA.84.060303},
archivePrefix = {arXiv},
       eprint = {1110.5944},
 primaryClass = {quant-ph},
       adsurl = {https://ui.adsabs.harvard.edu/abs/2011PhRvA..84f0303M}
}

@ARTICLE{Chakraborty2023,
       author = {{Chakraborty}, Ananya and {Gopalkrishna Naik}, Sahil and {Lobo}, Edwin Peter and {Krishna Patra}, Ram and {Sen}, Samrat and {Alimuddin}, Mir and {Mukherjee}, Amit and {Banik}, Manik},
        title = "{Overcoming Traditional No-Go Theorems: Quantum Advantage in Multiple Access Channels}",
      journal = { },
         year = 2023,
        month = sep,
archivePrefix = {arXiv},
       eprint = {2309.17263},
 primaryClass = {quant-ph},
}

@ARTICLE{FrenkelWeiner2015,
       author = {{Frenkel}, P{\'e}ter E. and {Weiner}, Mih{\'a}ly},
        title = "{Classical Information Storage in an n-Level Quantum System}",
      journal = {Communications in Mathematical Physics},
         year = 2015,
        month = {December},
       volume = {340},
       number = {2},
        pages = {563-574},
          doi = {10.1007/s00220-015-2463-0},
archivePrefix = {arXiv},
       eprint = {1304.5723},
 primaryClass = {cs.IT},
       adsurl = {https://ui.adsabs.harvard.edu/abs/2015CMaPh.340..563F}
}

@ARTICLE{Listspherecode,
       author = {Cohn, Henry},
        journal = "{Table of spherical codes}",
 NOTE =     "\url{https://hdl.handle.net/1721.1/153543} and 
             \url{https://spherical-codes.org}",
}

@ARTICLE{Robinson1961,
       author = {Raphael M. Robinson},
        title = "{Arrangement of 24 points on a sphere}",
      journal = {Mathematische Annalen},
         year = 1961,
        month = {February},
       volume = {144},
        pages = {17–48},
          doi = {10.1007/BF01396539}
}

@ARTICLE{Doolittle2026,
       author = {{Doolittle}, Brian and {Leditzky}, Felix and {Chitambar}, Eric},
        title = "{An Operational Framework for Nonclassicality in Quantum Communication Networks}",
      journal = {Quantum},
         year = 2026,
        month = apr,
       volume = {10},
        pages = {2052},
          doi = {10.22331/q-2026-04-08-2052},
archivePrefix = {arXiv},
       eprint = {2403.02988},
 primaryClass = {quant-ph},
       adsurl = {https://ui.adsabs.harvard.edu/abs/2026Quant..10.2052D}
}

@ARTICLE{Bowles2015b,
       author = {{Bowles}, Joseph and {Brunner}, Nicolas and {Paw{\l}owski}, Marcin},
        title = "{Testing dimension and nonclassicality in communication networks}",
      journal = {Physical Review A},
         year = 2015,
        month = aug,
       volume = {92},
       number = {2},
          eid = {022351},
        pages = {022351},
          doi = {10.1103/PhysRevA.92.022351},
archivePrefix = {arXiv},
       eprint = {1505.01736},
 primaryClass = {quant-ph},
       adsurl = {https://ui.adsabs.harvard.edu/abs/2015PhRvA..92b2351B}
}

@ARTICLE{Schlosser2026,
       author = {{Schl{\"o}sser}, Sebastian and {Kleinmann}, Matthias},
        title = "{Bounding the classical cost of simulating quantum behaviors in the prepare-and-measure scenario}",
    journal = { },
         year = 2026,
        month = mar,
archivePrefix = {arXiv},
       eprint = {2603.01255},
 primaryClass = {quant-ph},
}

@ARTICLE{Pandit2026,
       author = {{Pandit}, Ankush and {Hazra}, Soumyabrata and {Manna}, Satyaki and {Chaturvedi}, Anubhav and {Saha}, Debashis},
        title = "{Limits of classical correlations and quantum advantages under (anti-)distinguishability constraints in multipartite communication}",
      journal = {Physical Review A},
         year = 2026,
        month = mar,
       volume = {113},
       number = {3},
          eid = {032433},
        pages = {032433},
          doi = {10.1103/j54v-ng1h},
archivePrefix = {arXiv},
       eprint = {2506.07699},
 primaryClass = {quant-ph},
       adsurl = {https://ui.adsabs.harvard.edu/abs/2026PhRvA.113c2433P}
}

@ARTICLE{Buhrman2001PRL,
       author = {{Buhrman}, Harry and {Cleve}, Richard and {Watrous}, John and {de Wolf}, Ronald},
        title = "{Quantum Fingerprinting}",
      journal = {Physical Review Letters},
         year = 2001,
        month = oct,
       volume = {87},
       number = {16},
          eid = {167902},
        pages = {167902},
          doi = {10.1103/PhysRevLett.87.167902},
archivePrefix = {arXiv},
       eprint = {quant-ph/0102001},
 primaryClass = {quant-ph},
       adsurl = {https://ui.adsabs.harvard.edu/abs/2001PhRvL..87p7902B}
}

@inproceedings{Gavinsky2007,
  author    = {Gavinsky, Dmitry and Kempe, Julia and Kerenidis, Iordanis and Raz, Ran and de Wolf, Ronald},
  title     = {Exponential Separations for One-Way Quantum Communication Complexity, with Applications to Cryptography},
  booktitle = {Proceedings of the Thirty-Ninth Annual ACM Symposium on Theory of Computing},
  series    = {STOC '07},
  pages     = {516--525},
  year      = {2007},
  publisher = {Association for Computing Machinery},
  address   = {New York, NY, USA},
  doi       = {10.1145/1250790.1250866},
  eprint    = {quant-ph/0611209},
  archivePrefix = {arXiv},
  url       = {https://arxiv.org/abs/quant-ph/0611209}
}

@ARTICLE{sidajaya23,
       author = {{Sidajaya}, Peter and {Lim}, Aloysius Dewen and {Yu}, Baichu and {Scarani}, Valerio},
        title = "{Neural Network Approach to the Simulation of Entangled States with One Bit of Communication}",
      journal = {Quantum},
         year = 2023,
        month = {October},
       volume = {7},
        pages = {1150},
          doi = {10.22331/q-2023-10-24-1150},
archivePrefix = {arXiv},
       eprint = {2305.19935},
 primaryClass = {quant-ph},
       adsurl = {https://ui.adsabs.harvard.edu/abs/2023Quant...7.1150S}
}

@ARTICLE{Naik2025,
    author = {Gopalkrishna Naik, Sahil and Zartab, Mani and Gisin, Nicolas and Banik, Manik},
    title = {No-go theorem for generic simulation of qubit channels with finite classical resources},
    journal = {Proceedings of the Royal Society A: Mathematical, Physical and Engineering Sciences},
    volume = {482},
    number = {2333},
    pages = {20250831},
    year = {2026},
    month = {03},
    issn = {1364-5021},
    doi = {10.1098/rspa.2025.0831},
    url = {https://doi.org/10.1098/rspa.2025.0831},
archivePrefix = {arXiv},
       eprint = {2501.15807},
 primaryClass = {quant-ph},
}

@book{ConwaySloane1999,
  author    = {Conway, John H. and Sloane, Neil J. A.},
  title     = {Sphere Packings, Lattices and Groups},
  series    = {Grundlehren der mathematischen Wissenschaften},
  volume    = {290},
  edition   = {3rd},
  publisher = {Springer-Verlag},
  address   = {New York},
  year      = {1999}
}

@article{Roy2014,
author = {Roy, Aidan and Suda, Sho},
title = {Complex Spherical Designs and Codes},
journal = {Journal of Combinatorial Designs},
volume = {22},
number = {3},
pages = {105-148},
doi = {https://doi.org/10.1002/jcd.21379},
url = {https://onlinelibrary.wiley.com/doi/abs/10.1002/jcd.21379},
archivePrefix = {arXiv},
       eprint = {1104.4692},
 primaryClass = {math.CO},
year = {2014}
}

@article{DelsarteGoethalsSeidel1977,
  author  = {Delsarte, P. and Goethals, J. M. and Seidel, J. J.},
  title   = {Spherical codes and designs},
  journal = {Geometriae Dedicata},
  volume  = {6},
  number  = {3},
  pages   = {363--388},
  year    = {1977}
}

@article{Sloanlist,
  author  = "{N. J. A. Sloane, with the collaboration of R. H. Hardin, W. D. Smith and others}",
  title   = {Tables of Spherical Codes},
  journal = {published electronically at NeilSloane.com/packings/},
  url = {http://NeilSloane.com/packings}
}

@article{Naik2022,
  title = {Composition of Multipartite Quantum Systems: Perspective from Timelike Paradigm},
  author = {Naik, Sahil Gopalkrishna and Lobo, Edwin Peter and Sen, Samrat and Patra, Ram Krishna and Alimuddin, Mir and Guha, Tamal and Bhattacharya, Some Sankar and Banik, Manik},
  journal = {Physical Review Letters},
  volume = {128},
  issue = {14},
  pages = {140401},
  numpages = {7},
  year = {2022},
  month = {Apr},
  publisher = {American Physical Society},
  doi = {10.1103/PhysRevLett.128.140401},
  url = {https://link.aps.org/doi/10.1103/PhysRevLett.128.140401},
archivePrefix = {arXiv},
       eprint = {2107.08675},
 primaryClass = {quant-ph},
}

@ARTICLE{Chakraborty2026b,
       author = {{Chakraborty}, Ananya and {Banik}, Manik and {de Wolf}, Ronald},
        title = "{Exponential Advantage of Multipartite Entanglement over Quantum Communication with Applications to Bounded-Storage Cryptography}",
      journal = { },
         year = 2026,
        month = jul,
archivePrefix = {arXiv},
       eprint = {2607.27957},
 primaryClass = {quant-ph},
}

%%%%%%%%%%%%%%%%%%%%%%%%%%%%%%%%%%%%%%%%%%%%%%%%%%%%%%%%%%%%%%%%%%%%%%%%%%%%%%%%%%%%%
%%%%%%%%%%%%%%%%%%%%%%%%%%%%%%%% START OF APPENDIX %%%%%%%%%%%%%%%%%%%%%%%%%%%%%%%%%%
%%%%%%%%%%%%%%%%%%%%%%%%%%%%%%%%%%%%%%%%%%%%%%%%%%%%%%%%%%%%%%%%%%%%%%%%%%%%%%%%%%%%%

\onecolumngrid
\appendix

\section{Detailed quantum strategies}\label{quantumstrategies}

It is possible to generalise the task to an arbitrary number of parties. Suppose there are $n$ Alices and each receives an input $x_i\in \{1,2,\ldots,m\}$. Given that input, Alice-$i$ prepares a $d$-dimensional state $\ket{\Psi_{x_i}}$ and sends it to Bob. These states satisfy:
\begin{align}
    \forall\, j\neq k\in \{1,2,\ldots,m\}:\quad |\braket{\Psi_{j}}{\Psi_{k}}|\leq f(n):=\frac{1-(\sqrt[n]{2}-1)^2}{1+(\sqrt[n]{2}-1)^2}, \label{sphercode}
\end{align}
where $f(n)$ is defined as in~\eqref{deffunction}. We now describe the complete quantum strategy.

\subsection{Step 1: Reducing the angle}

Bob first applies a unitary transformation (mapping the two states into a qubit subspace) on the received state from Alice-$i$, satisfying:
\begin{align}
    U_i\ket*{\Psi_{\tilde{x}^0_i}}&=\cos{(\alpha_i)}\ket{0}+\sin{(\alpha_i)}\ket{1}\\
    U_i\ket*{\Psi_{\tilde{x}^1_i}}&=\cos{(\alpha_i)}\ket{0}-\sin{(\alpha_i)}\ket{1} \, ,
\end{align}
for some angle $\alpha_i$. Since unitary transformations preserve the inner product and $|\braket*{\Psi_{\tilde{x}^0_i}}{\Psi_{\tilde{x}^1_i}}|\leq f(n)$ by~\eqref{sphericalcode}, we know that $|\braket*{U_i\Psi_{\tilde{x}^0_i}}{U_i\Psi_{\tilde{x}^1_i}}|=\cos{(2\alpha_i)}\leq f(n)$. Since $\cos(2\alpha_i)\leq f(n)=\cos(2\alpha^c_n)$, it follows that $\alpha_i\geq \alpha^c_n$. For simplicity, we drop the index $i$ from now on.

In a next step, Bob prepares an ancilla qubit in $\ket{0}$ and applies the following unitary on the joint system:
\begin{align}
    \tilde{U}=\begin{pmatrix}
        \frac{\cos{(\alpha^c_n)}\cdot \cos{(\beta)}}{\cos{(\alpha)}} & 0 & 0 & -\frac{\sin{(\alpha^c_n)}\cdot \sin{(\beta)}}{\cos{(\alpha)}}\\
        0 & \frac{\sin{(\alpha^c_n)}\cdot \cos{(\beta)}}{\sin{(\alpha)}} & -\frac{\cos{(\alpha^c_n)}\cdot \sin{(\beta)}}{\sin{(\alpha)}} & 0 \\
        0 & \frac{\cos{(\alpha^c_n)}\cdot \sin{(\beta)}}{\sin{(\alpha)}} & \frac{\sin{(\alpha^c_n)}\cdot \cos{(\beta)}}{\sin{(\alpha)}} & 0  \\
        \frac{\sin{(\alpha^c_n)}\cdot \sin{(\beta)}}{\cos{(\alpha)}} & 0 & 0 & \frac{\cos{(\alpha^c_n)}\cdot \cos{(\beta)}}{\cos{(\alpha)}} \\
    \end{pmatrix},
\end{align}
where $\beta$ is chosen such that:
\begin{align}
    \cos(2\beta):=\frac{\cos{(2\alpha)}}{\cos{(2\alpha^c_n)}}.
\end{align}
Note that this equation has a solution since $\cos{(2\alpha)}\leq \cos{(2\alpha^c_n)}$. (Note also that the unitary is ill-defined if either $\cos{(\alpha)}=0$, $\sin{(\alpha)}=0$, or $\cos{(2\alpha^c_n)}=0$. None of these cases appear since $|\braket{\Psi_j}{\Psi_k}|\leq f(n)=\cos{(2\alpha^c_n)}<1$.) One can verify that $\tilde{U}$ is unitary and maps:
\begin{align}
    \tilde{U}\ket{0}\otimes (\cos{(\alpha)}\ket{0}+\sin{(\alpha)}\ket{1})=&(\cos{(\beta)}\ket{0}+\sin{(\beta)}\ket{1})\otimes(\cos{(\alpha^c_n)}\ket{0}+\sin{(\alpha^c_n)}\ket{1}),\\
    \tilde{U}\ket{0}\otimes (\cos{(\alpha)}\ket{0}-\sin{(\alpha)}\ket{1})=&(\cos{(\beta)}\ket{0}-\sin{(\beta)}\ket{1})\otimes(\cos{(\alpha^c_n)}\ket{0}-\sin{(\alpha^c_n)}\ket{1}).
\end{align}
After tracing out the ancilla qubit, Bob holds either $\ket{\psi_0}=\cos{(\alpha^c_n)}\ket{0}+\sin{(\alpha^c_n)}\ket{1}$ or $\ket{\psi_1}=\cos{(\alpha^c_n)}\ket{0}-\sin{(\alpha^c_n)}\ket{1}$.

We verify that $\tilde{U}$ is unitary, i.e.\ $\tilde{U}^\dagger \tilde{U}=\id$. All off-diagonal terms of $\tilde{U}^\dagger \tilde{U}$ vanish. For the diagonal terms, the first and fourth entries give:
\begin{align}
    \frac{\cos^2{(\alpha^c_n)}\cos^2{(\beta)}+\sin^2{(\alpha^c_n)}\sin^2{(\beta)}}{\cos^2{(\alpha)}}=1 \, .
\end{align}
To see this, we use that (note $\cos^2(x)=(1+\cos(2x))/2$ and $\sin^2(x)=(1-\cos(2x))/2$):
\begin{align}
\cos^2(\alpha^c_n)\cos^2(\beta)+\sin^2(\alpha^c_n)\sin^2(\beta)&=\frac{(1+\cos(2\alpha^c_n))(1+\cos(2\beta))}{4}+\frac{(1-\cos(2\alpha^c_n))(1-\cos(2\beta))}{4}\\
&=\frac{1+\cos(2\alpha^c_n)\cos(2\beta)}{2}=\frac{1+\cos(2\alpha)}{2}=\cos^2(\alpha)
\end{align}
An analogous calculation handles the second and third diagonal entries.

\subsection{Step 2: Applying the joint measurement}

After Step~1, Bob holds a qubit from each Alice, which is either $\ket{\psi_0}$ or $\ket{\psi_1}$. %Writing $c_n=\cos(\alpha^c_n)$ and $s_n=\sin(\alpha^c_n)$, 
The total state is one of the $2^n$ possibilities labelled by $\vec{r}\in\{0,1\}^n$:
\begin{align}
    \ket{\Omega_{\vec{r}}}:=&\ket{\psi_{r_1}}\otimes \ket{\psi_{r_2}} \otimes ... \otimes \ket{\psi_{r_n}}=\sum_{\vec{z}} (-1)^{\vec{r}\cdot \vec{z}}\cos(\alpha^c_n)^{(n-\vec{z}\cdot \vec{1})}\sin(\alpha^c_n)^{\vec{z}\cdot \vec{1}}\ket{\vec{z}}
\end{align}
where $\vec{1}=(1,1,...,1)$ is a vector where each of the $n$ entries equals one. Bob measures in the basis:
\begin{align}
    \ket{\Phi_{\vec{r}}}=\frac{1}{\sqrt{2^n}}\left( \ket*{\vec{0}} -\sum_{\vec{z}\neq \vec{0}} (-1)^{\vec{r}\cdot \vec{z}} \ket{\vec{z}} \right).
\end{align}
This forms an orthonormal basis since:
\begin{align}
    \braket{\Phi_{\vec{r}}}{\Phi_{\vec{s}}}&=\frac{1}{2^n}\sum_{\vec{z}}(-1)^{(\vec{r}+\vec{s})\cdot \vec{z}}=\delta_{\vec{r},\vec{s}},
\end{align}
where the last equality follows because the sum equals $2^n$ when $\vec{r}=\vec{s}$ and vanishes otherwise (terms pair up with opposite signs for any differing index). Now we verify $\braket{\Phi_{\vec{r}}}{\Omega_{\vec{r}}}=0$:
\begin{align}
    \sqrt{2^n}\braket{\Omega_{\vec{r}}}{\Phi_{\vec{r}}}=&\cos{(\alpha_n^c)}^n-\sum_{\vec{z}\neq \vec{0}}(-1)^{2\ \vec{r}\cdot \vec{z}}\cos(\alpha_n^c)^{(n-\vec{z}\cdot \vec{1})}\sin(\alpha^c_n)^{\vec{z}\cdot \vec{1}}\\
    =&\cos{(\alpha_n^c)}^n-\sum_{k=1}^{n}{n\choose k}\cos(\alpha_n^c)^{(n-k)}\sin(\alpha_n^c)^{k}\\
    =&2\cos{(\alpha_n^c)}^n-\sum_{k=0}^{n}{n\choose k}\cos(\alpha_n^c)^{(n-k)}\sin(\alpha_n^c)^{k}\\
    =&2\cos{(\alpha_n^c)}^n-(\cos(\alpha_n^c)+\sin(\alpha_n^c))^{n}\\
    =&\cos{(\alpha_n^c)}^n(2-(1+\tan{(\alpha_n^c)})^{n})\\
    =&0 \, ,
\end{align}
since $\tan{(\alpha_n^c)}=\sqrt[n]{2}-1$ (note that $\cos(2x)=(1-\tan^2(x))/(1+\tan^2(x))$ and compare with Eq.~\eqref{deffunction}) was chosen precisely to make this expression vanish. Hence, outcome $\ket{\Phi_{\vec{r}}}$ certifies that the preparation was not $\ket{\Omega_{\vec{r}}}$, and Bob can solve the task perfectly.

\section{An alternative approach}\label{app:johnston}
It might be worth mentioning that a weaker but simpler to state bound for $f(n)$ can be established via the results of Ref.~\cite{Johnston2025}. In fact, they show that: 
\begin{theorem}[{\cite[Cor.~5.3]{Johnston2025}}]\label{thmJohnston}
    A set of $k$ pure quantum states $\{\ket{ \psi_1}, \ket{ \psi_2}, \dots, \ket{ \psi_k}\}$ is antidistinguishable if
    \begin{align}\label{eq:johnston_condition}
        \norm{G}_F:=\sqrt{\sum_{i=1}^k \sum_{j=1}^k \lvert\braket{ \psi_i}{ \psi_j}\rvert^2}\leq \frac{k}{\sqrt{2}} \, .
    \end{align}
    Here, $G_{ij}=\braket*{ \psi_i}{ \psi_j}$ is the Gram matrix associated with the set of states $\{\ket{ \psi_i}\}_{i\in[k]}$, and $\norm{G}_F$ is its Frobenius norm.
\end{theorem}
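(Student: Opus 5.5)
We plan to prove antidistinguishability through semidefinite-programming duality instead of writing down an explicit POVM. The primal problem is to minimise $\sum_i \bra{\psi_i}M_i\ket{\psi_i}$ over POVMs $\{M_i\}$, and the set is antidistinguishable exactly when this minimum is $0$. Its dual is to maximise $\Tr[Y]$ over Hermitian $Y$ subject to $Y\leq P_i:=\ketbra{\psi_i}$ for all $i$. Strong duality holds: for instance $M_i=\id/k$ is strictly feasible. So it suffices to show that every dual-feasible $Y$ satisfies $\Tr[Y]\leq 0$ whenever $\norm{G}_F\leq k/\sqrt{2}$. We may restrict $Y$ to the span of the $\ket{\psi_i}$. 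Its component off that span is below every $P_i$ there, hence $\leq 0$, and can only lower the trace.

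The key algebraic fact is that the frame operator $S:=\sum_i P_i$ has the same nonzero spectrum as $G$, so $\Tr[S^2]=\norm{G}_F^2$. Feasibility gives $\Tr[Y(\id-P_i)]\leq \Tr[P_i(\id-P_i)]=0$ for each $i$. Summing over $i$ yields
\begin{align}
k\,\Tr[Y]\leq \Tr[YS]\leq \norm{Y}_F\,\norm{G}_F\leq \frac{k}{\sqrt{2}}\norm{Y}_F ,
\end{align}
by Cauchy--Schwarz, so $\Tr[Y]\leq \norm{Y}_F/\sqrt{2}$. Feasibility also forces $Y$ to have at most one positive eigenvalue $\lambda$. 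Indeed, $Y\leq P_1$ and $P_1$ has rank one, so $Y$ is negative semidefinite on $\ket{\psi_1}^\perp$. Moreover $\lambda\leq 1$. Writing the negative eigenvalues as $-\mu_j$, the inequality becomes $\lambda-\sum_j\mu_j\leq \sqrt{(\lambda^2+\sum_j\mu_j^2)/2}$.

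The main obstacle is that this inequality alone does not force $\lambda\leq\sum_j\mu_j$. It does when $Y$ is rank one, but fails for instance when $\lambda=1$ and $\sum_j\mu_j=0.3$. So the single test operator $\id-P_i$ is too weak, and we need a sharper feasibility inequality. First we would test the constraints against other positive operators, for example $P_j$ for $j\neq i$, which gives $\bra{\psi_j}Y\ket{\psi_j}\leq |\braket{\psi_i}{\psi_j}|^2$. We would then combine these bounds, weighted by the Gram entries, with the eigenvector $\ket{v}$ of $\lambda$, aiming for $\Tr[YS]\leq \norm{G}_F\,(\lambda^2-\sum_j\mu_j^2)^{1/2}$ or a similar bound that rules out the bad regime. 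If this route fails, the fallback is to construct the POVM directly. We would take $M_i\propto$ the projector onto $\ket{\psi_i}^\perp$ compressed by $S^{-1/2}$, pretty-good-measurement style. We would then check completeness via $\lambda_{\max}(S)\leq \norm{G}_F\leq k/\sqrt{2}$, correcting any deficit by adding a positive remainder to the operators with $i\geq 2$.
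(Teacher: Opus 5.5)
The paper does not prove this statement; it imports it from Ref.~\cite{Johnston2025}. Your set-up is sound. The dual SDP, strong duality via $M_i=\id/k$, the identity $\Tr[S^2]=\norm{G}_F^2$, the bound $k\Tr[Y]\le\Tr[YS]$, and the fact that a feasible $Y$ has at most one positive eigenvalue are all correct. But the inequality this chain produces, $\Tr[Y]\le\norm{Y}_F/\sqrt{2}$, does not imply $\Tr[Y]\le 0$, as you note yourself. From that point the proposal is a plan rather than a proof, and neither repair works as stated.

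\emph{The target bound.} Suppose you proved $\Tr[YS]\le\norm{G}_F(\lambda^2-\sum_j\mu_j^2)^{1/2}$. With a single negative eigenvalue $\mu<\lambda$ it gives only $\lambda-\mu\le\sqrt{(\lambda^2-\mu^2)/2}$, i.e.\ $\lambda\le 3\mu$. So $\lambda=1$, $\mu=1/2$, with $\Tr[Y]=1/2>0$, is not excluded.

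\emph{The fallback POVM.} If, as in the pretty-good measurement, you mean $S^{-1/2}(\id-P_i)S^{-1/2}$, it does not annihilate $\ket{\psi_i}$ unless $\ket{\psi_i}$ is an eigenvector of $S$. Adding a positive remainder to $M_i$ keeps $M_i\ket{\psi_i}=0$ only if the remainder itself annihilates $\ket{\psi_i}$. Splitting the deficit that way is the original problem again. The only quantitative input you name, $\lambda_{\max}(S)\le k/\sqrt{2}$, cannot suffice: for $k=2$ it holds for any pair with $|\langle\psi_1|\psi_2\rangle|\le\sqrt{2}-1$, and such a pair is not antidistinguishable unless orthogonal.

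The missing idea is to use each constraint $Y\le P_i$ exactly, rather than only its trace against $\id-P_i$. Write $Y=\lambda\ketbra{v}-N$ with $N\ge 0$ and $N\ket{v}=0$. Perturb $N$ by $\delta(\id-\ketbra{v})$ so that it is invertible on $v^\perp$; this keeps $Y$ feasible with $\Tr[Y]>0$. Write $\ket{\psi_i}=\alpha_i\ket{v}+\ket{\phi_i}$ with $\ket{\phi_i}\perp\ket{v}$. Then $Y\le P_i$ reads $\lambda\ketbra{v}\le P_i+N$, and a Schur complement plus Sherman--Morrison turns this into
\begin{align}
\lambda\left(1+\bra{\phi_i}N^{-1}\ket{\phi_i}\right)\le|\alpha_i|^2=1-\norm{\phi_i}^2 .
\end{align}
Set $T:=\sum_i\ketbra{\phi_i}$, $p:=\Tr[T^{1/2}]$ and $\sigma:=\Tr[T]$. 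Summing over $i$ and using $\Tr[N^{-1}T]\ge p^2/\Tr[N]$ shows that $\lambda>\Tr[N]$ forces $p^2<k-\sigma$. (The case $T=0$ gives $\norm{G}_F=k$ and is excluded.) On the other hand, with $\tau_j$ the eigenvalues of $T$,
\begin{align}
\norm{G}_F^2=\Tr[S^2]\ge(\bra{v}S\ket{v})^2+\Tr[T^2]=(k-\sigma)^2+\sum_j\tau_j^2 .
\end{align}
With $\epsilon:=k-\sigma-p^2>0$ one checks
\begin{align}
(k-\sigma)^2+\sum_j\tau_j^2-\frac{k^2}{2}=\frac{(p^2-\sigma)^2}{2}-\Bigl(\sigma^2-\sum_j\tau_j^2\Bigr)+\epsilon(p^2-\sigma)+\frac{\epsilon^2}{2}\ge\frac{\epsilon^2}{2}>0 .
\end{align}
The inequality holds because $(p^2-\sigma)^2/2=2\bigl(\sum_{j<l}\sqrt{\tau_j\tau_l}\bigr)^2\ge 2\sum_{j<l}\tau_j\tau_l=\sigma^2-\sum_j\tau_j^2$ and $p^2\ge\sigma$. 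This contradicts $\norm{G}_F\le k/\sqrt{2}$, so every feasible $Y$ has $\Tr[Y]\le 0$. An argument of this strength is what your proposal lacks; with it, your dual route becomes a complete, self-contained proof.
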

Due to the tensor product structure this is easy to evaluate for the $k=2^n$ states in Eq.~\eqref{eq:prodstates} (namely, $\ket*{\Psi_{\vec{r}}} := \ket*{\Psi_{\tilde{x}^{r_1}_1}} \otimes \ket*{\Psi_{\tilde{x}^{r_2}_2}} \otimes \cdots \otimes \ket*{\Psi_{\tilde{x}^{r_n}_n}}\, , \quad \vec{r} \in \{0,1\}^n$):
\begin{align}
\begin{split}
    (\norm{G}_F)^2:=\sum_{i=1}^k \sum_{j=1}^k \lvert\braket{ \Psi_i}{ \Psi_j}\rvert^2
    =\prod_{i=1}^n (2+2\lvert\braket*{ \Psi_{\tilde{x}^0_i}}{ \Psi_{\tilde{x}^1_i}}\rvert^2)
    \leq 2^n (1+f(n)^2)^n\, .
\end{split}
\end{align}
we conclude that if $2^n (1+f(n)^2)^n\leq k^2/2=2^{2n}/2$, which translates to $f(n)^2 \leq 2/\sqrt[n]{2}-1$, all sets of states are antidistinguishable. Note further that $f(n)$ can approach 1, in the limit where $n$ is large. This is a simple and easy to establish bound but it turns out that we can derive a better bound by using the techniques introduced in the main text. Note also, that in this approach, Bob's measurements are not explicit but can be decided via semidefinite programming (SDP)~\cite{Jain2014, Johnston2025}:
\begin{align}\label{eq:SDP}
    \min_{\{M_{i}\}_i}&\ \sum_i \mathrm{Tr}[M_i  \rho_{i}] \\ \text{subject to}&:\ \sum_i M_i=\id\,,\, M_i \geq 0  \, .
\end{align}

\section{Detailed classical strategies}\label{classicalstrategies}

\subsection{Proof of Theorem~\ref{thm:classical}}
\begin{theorem}
    If each party sends a classical message of fewer than $m$ symbols, then Task~\ref{task} cannot be won with certainty, even in the presence of shared randomness.
\end{theorem}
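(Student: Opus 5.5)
The plan is a proof by contradiction that reduces the shared-randomness case to a single deterministic strategy. Suppose a classical strategy, with messages $c_i$ from an alphabet of size $s<m$ for every Alice, wins Task~\ref{task} with certainty. Winning with certainty means the average over $\lambda$ of the success probability equals one for every input. So there is a fixed value $\lambda^*$ with $p(\lambda^*)>0$. I will argue for this single $\lambda^*$, holding it fixed throughout.

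The first key step is to extract a "collision" for each Alice. For each $i$ and each input $x$, pick one message $c_i(x)$ with $p_{A_i}(c_i(x)|x,\lambda^*)>0$. This defines a function $c_i:\{1,\dots,m\}\to$ alphabet of size $s<m$. By pigeonhole there exist $u_i\neq v_i$ with $c_i(u_i)=c_i(v_i)=:\gamma_i$. Now give Bob the promise $(\tilde x^0_i,\tilde x^1_i)=(u_i,v_i)$ for every $i$; this is a legitimate input for any choice of $x_i\in\{u_i,v_i\}$.

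Next, look at Bob's behaviour on the message tuple $(\gamma_1,\dots,\gamma_n)$ with this promise $y$ and $\lambda^*$. His output distribution $p_B(b|\gamma_1,\dots,\gamma_n,y,\lambda^*)$ is supported on combinations $b\in\prod_i\{u_i,v_i\}$. Choose some $b^*$ in its support. Set the Alices' inputs to $x=b^*$, which is allowed since $b^*_i\in\{u_i,v_i\}$. With probability at least
\[
p(\lambda^*)\prod_i p_{A_i}(\gamma_i|b^*_i,\lambda^*)\,p_B(b^*|\vec\gamma,y,\lambda^*)>0
\]
the event ($\lambda=\lambda^*$, messages $\vec\gamma$, output $b^*=x$) occurs. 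Each factor here is positive: $\gamma_i=c_i(b^*_i)$ because $b^*_i$ is $u_i$ or $v_i$, and both map to $\gamma_i$. So the strategy loses with positive probability on this input, contradicting certainty.

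The argument is essentially the informal one from the main text, made rigorous. The only subtle point is handling stochastic encoders and shared randomness. I fix a single $\lambda$ in the support, and I use only supports rather than full distributions, so no averaging argument is needed. That step, choosing messages in the support and noting that the promise is adapted to the collision found, is the main thing to get right. It should also be noted that "fewer than $m$ symbols" for each party is exactly what the pigeonhole step needs. This is because the promise may be chosen by the referee after fixing the strategy, since perfect winning is demanded for all inputs.
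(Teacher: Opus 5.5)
Your proof is correct and follows the same route as the paper. Both fix a value of the shared randomness, use pigeonhole to find a collision for each Alice, choose the promise to be exactly the colliding pairs, and set the input equal to Bob's output, which cannot depend on which consistent combination was prepared. The only difference is how randomness is handled. The paper absorbs local randomness into $\lambda$ so that each $\lambda$ gives a deterministic strategy, then argues that a perfect strategy would need a single $\lambda$ that wins on all finitely many input/promise pairs. You instead work with the supports of the stochastic encoders and decoder at one $\lambda^*$ of positive weight and directly exhibit a positive losing probability, which is valid for the discrete $\lambda$ in the paper's model.
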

\begin{proof}

%We first give the detailed argument that a classical message of fewer than $m$ symbols cannot win Task~\ref{task} with certainty, even when the parties share unlimited randomness $\lambda$.
We want to show that each classical strategy of fewer than $m$ symbols necessarily leads to one input combination, in which the parties cannot win with certainty.

%Since the winning probability is linear in the strategy, 
We may absorb, without loss of generality, any local randomness of the parties into $\lambda$ and assume that, for each fixed value of $\lambda$, all encodings $c_i=f^\lambda_i(x_i)$ and Bob's decoding are deterministic. Fix such a $\lambda$. Alice-$i$'s encoding then maps her $m$ possible inputs to fewer than $m$ messages, so it cannot be injective: by the pigeonhole principle, there exist two inputs $x_i\neq \tilde{x}_i$ with $f^\lambda_i(x_i)=f^\lambda_i(\tilde{x}_i)$. Suppose Bob's promise is exactly $(\tilde{x}^0_i,\tilde{x}^1_i)=(x_i,\tilde{x}_i)$ for every Alice. Then Bob receives the same messages $c_1,\ldots,c_n$ for all $2^n$ combinations consistent with this promise, so his output $\vec{b}$ is one fixed combination among them, independent of which combination was actually prepared. Since $\vec{b}$ is itself consistent with the promise, the input $\vec{x}=\vec{b}$ is possible, and for that particular input they cannot win with certainty. Hence, for every fixed $\lambda$, there is an input on which the corresponding deterministic strategy fails. Since there are finitely many input/promise pairs, a strategy winning with probability one on all of them would require a single $\lambda$ to win on all of them simultaneously, contradicting the above.
%This rules out a perfect strategy. Suppose the winning probability were exactly one for every input and every promise. There are only finitely many such pairs, and for each of them the win indicator is a $\{0,1\}$-valued random variable whose expectation over $\lambda$ equals one; hence it equals one for almost every $\lambda$. Intersecting these finitely many full-measure sets, there would exist a single value of $\lambda$ winning on all input/promise pairs simultaneously, contradicting the previous paragraph. Therefore some input has winning probability strictly less than one, which proves Theorem~\ref{thm:classical}. 
\end{proof}

\subsection{Winning probabilities for a given message length}

We now analyse how well classical strategies with a given communication budget can perform. Consider the task with two Alices, each with input in $\{1,\ldots,6\}$, and each allowed to send two bits. A natural strategy is: $c=00$ for $x=1$; $c=01$ for $x=2$; $c=10$ for $x\in\{3,4\}$; $c=11$ for $x\in\{5,6\}$.

If Bob is told that $x_1\in\{3,4\}$ and $x_2\in\{5,6\}$, he receives $c_1=10$ and $c_2=11$ for all four combinations and must guess. His success probability is $3/4$ for this instance. The same failure occurs for the four promises in which both Alices map two inputs to the same message. In all other cases Bob can exclude one preparation with certainty.

This approach can be generalized to larger values of $m$. By symmetry, we may assume Alice sends message $c=k$ for inputs $n_1+\cdots+n_{k-1}+1$ through $n_1+\cdots+n_k$. The number of pairs where Bob cannot discriminate Alice-$i$'s input from her message is $\sum_{c:\, n_c\geq 2}\binom{n_c}{2}$, where $\sum_c n_c=m$. The winning probability is:
\begin{align}
    p_{\mathrm{win}}=1-\frac{1}{2^n}\prod_{i=1}^{n} \frac{\sum_{c:\,n_c\geq 2} \binom{n_c}{2}}{\binom{m}{2}}.
\end{align}
For the two-Alice, two-bit example, there are $\binom{6}{2}=15$ possible promises per Alice. The strategy above has $\binom{2}{2}=1$ bad pair per colliding message, arising for four of the $15^2=225$ combinations. The winning probability is:
\begin{align}
    \frac{1}{225}\left(221\cdot 1 + 4 \cdot \frac{3}{4}\right)=\frac{224}{225}<1 \, .
\end{align}
%One can verify this is optimal among all two-bit strategies. (needs argument)

\section{Spherical Caps on the Complex Unit Sphere}\label{appspherecode}

Let $z,v \in \mathbb{C}^d$ be unit vectors, and define the spherical cap around $v$ with opening angle $\theta$ as:
\begin{align}
    C(\theta) = \bigl\{\, z \in \mathbb{C}^d : \|z\|=1,\; |\langle z,v\rangle| \ge \cos(\theta) \,\bigr\}.
\end{align}
Our goal is to compute the relative area of such a cap compared to the whole sphere.

\begin{proposition}
Let $z \in \mathbb{C}^d$ be Haar-uniform on the unit sphere, and let $v \in \mathbb{C}^d$ be a fixed unit vector. Then $X := |\langle z,v\rangle|^2 \sim \mathrm{Beta}(1,d-1)$, with density $g_X(x) = (d-1)(1-x)^{d-2}$ for $0\leq x\leq 1$.
\end{proposition}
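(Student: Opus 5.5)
By unitary invariance of the Haar measure we may take $v=e_1$, so that $X=|z_1|^2$. The plan is to realise the uniform vector on the complex sphere through Gaussians, in the spirit of Ref.~\cite{Muller1959}. Let $g=(g_1,\dots,g_d)$ have i.i.d.\ standard complex Gaussian entries, i.e.\ real and imaginary parts i.i.d.\ $\mathcal{N}(0,1/2)$. The law of $g$ is invariant under unitaries, so $z:=g/\|g\|$ is Haar-uniform on the unit sphere of $\mathbb{C}^d$. We then have
\begin{align}
    X=\frac{|g_1|^2}{|g_1|^2+\sum_{k=2}^d |g_k|^2}.
\end{align}

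The key steps are as follows. First, $E_k:=|g_k|^2$ is a sum of squares of two independent real Gaussians of variance $1/2$, so it is $\chi^2_2/2$, i.e.\ exponential with rate one; equivalently $\mathrm{Gamma}(1,1)$. The $E_k$ are independent. Second, $Y:=\sum_{k\geq2}E_k\sim\mathrm{Gamma}(d-1,1)$, and it is independent of $E_1$. Third, apply the standard Gamma--Beta relation: if $A\sim\mathrm{Gamma}(a,1)$ and $B\sim\mathrm{Gamma}(b,1)$ are independent, then $A/(A+B)\sim\mathrm{Beta}(a,b)$. With $a=1$ and $b=d-1$ this gives $\mathrm{Beta}(1,d-1)$, whose density is $x^{0}(1-x)^{d-2}/B(1,d-1)=(d-1)(1-x)^{d-2}$.

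To keep the argument self-contained, I would verify the last step directly rather than quote it. Computing the CDF,
\begin{align}
    P(X\leq x)=P\bigl(E_1(1-x)\leq xY\bigr)=1-\mathbb{E}\bigl[e^{-xY/(1-x)}\bigr]=1-(1-x)^{d-1},
\end{align}
using the Laplace transform $\mathbb{E}[e^{-tY}]=(1+t)^{-(d-1)}$ at $t=x/(1-x)$. Differentiating then gives the claimed density.

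The only point requiring care is the first step: justifying that the normalised complex Gaussian is exactly the Haar (surface) measure, and that $\|g\|$ plays no role beyond normalisation. This is handled by the unitary invariance of the Gaussian together with the uniqueness of the unitarily invariant probability measure on the sphere. Everything else is a routine calculation. For the application in the main text, I would then note that the cap fraction is $P\bigl(X\geq\cos^2\theta\bigr)=(1-\cos^2\theta)^{d-1}$.
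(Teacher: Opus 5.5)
Your proposal is correct and follows essentially the same route as the paper. Like the paper, you reduce to $v=e_1$ by unitary invariance, realise the Haar measure by normalising i.i.d.\ complex Gaussians, and apply the Gamma/chi-squared--Beta relation to $|g_1|^2/(|g_1|^2+\sum_{k\geq 2}|g_k|^2)$. Your explicit CDF computation via the Laplace transform $(1+t)^{-(d-1)}$ is a self-contained check of the step the paper only quotes, and your normalisation $|g_k|^2\sim\chi^2_2/2$ is the accurate one for $\mathcal{CN}(0,1)$, though the scaling cancels in the ratio.
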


\begin{proof}
By unitary invariance of the Haar measure, take $v = e_1$. Then $X = |z_1|^2$. A standard method to generate a Haar-uniform vector $z$ is to take $y_j \sim \mathcal{CN}(0,1)$ i.i.d.\ and normalise $z = y/\|y\|$~\cite{Muller1959}. Writing $A = |y_1|^2 \sim \chi^2_2$ and $B = \sum_{j=2}^d |y_j|^2 \sim \chi^2_{2(d-1)}$, we have $X = A/(A+B)$. By the classical Beta--chi-squared relation with $\alpha=1$, $\beta=d-1$, this gives $X \sim \mathrm{Beta}(1,d-1)$.
\end{proof}

\begin{corollary}
For $0 \le \theta \le \pi/2$, the fraction of the complex unit sphere contained in $C(\theta)$ is:
\begin{align}
\Pr(z \in C(\theta)) = [1 - \cos^2(\theta)]^{d-1} = [\sin(\theta)]^{2(d-1)}.
\end{align}
\end{corollary}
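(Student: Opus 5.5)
The plan is to obtain the corollary directly from the preceding Proposition, which says that $X=|\langle z,v\rangle|^2$ follows the $\mathrm{Beta}(1,d-1)$ law with density $g_X(x)=(d-1)(1-x)^{d-2}$ on $[0,1]$.

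\textbf{Step 1: translate the cap into a tail event.} By definition, $z\in C(\theta)$ if and only if $|\langle z,v\rangle|\ge\cos(\theta)$. For $0\le\theta\le\pi/2$ we have $\cos(\theta)\ge 0$, and both sides of this inequality are nonnegative. Squaring is therefore monotone here, so the condition is equivalent to $X\ge\cos^2(\theta)$. This is the only place the restriction $\theta\in[0,\pi/2]$ enters. For larger $\theta$ the cosine becomes negative, and the cap would be the whole sphere.

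\textbf{Step 2: integrate the density.} Set $c=\cos^2(\theta)\in[0,1]$. Then
\begin{align}
\Pr(X\ge c)=\int_c^1 (d-1)(1-x)^{d-2}\,dx=\bigl[-(1-x)^{d-1}\bigr]_c^1=(1-c)^{d-1}.
\end{align}
This antiderivative is valid for $d\ge 2$. In the degenerate case $d=1$, $X\equiv 1$ and both sides equal $1$, so I would mention it separately. Finally, $1-\cos^2(\theta)=\sin^2(\theta)$ gives the second form $[\sin(\theta)]^{2(d-1)}$.

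\textbf{Main obstacle.} No real obstacle is expected, since the work is done by the Proposition. The only point needing care is the boundary and sign bookkeeping: checking that squaring preserves the inequality, and noting that whether the boundary $X=c$ is included does not matter because $X$ has a continuous law for $d\ge2$.

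For its later use, I would also note how this feeds the volumetric bound. The cap used there has $\cos(\theta)=f(n)=\cos(2\alpha^c_n)$, so its fraction is $(1-f(n)^2)^{d-1}$. A maximal greedy code has caps covering the sphere, which gives $\bar m\ge (1-f(n)^2)^{-(d-1)}$.
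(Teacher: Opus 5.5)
Your argument is correct and follows the paper's intended route. The paper states the corollary as an immediate consequence of the $\mathrm{Beta}(1,d-1)$ law in the Proposition, i.e.\ $\Pr(X\ge\cos^2\theta)=\int_{\cos^2\theta}^1 (d-1)(1-x)^{d-2}\,dx=(1-\cos^2\theta)^{d-1}$, and leaves this step implicit; your handling of the sign of $\cos\theta$ on $[0,\pi/2]$ and of the degenerate case $d=1$ simply makes explicit what the paper glosses over.
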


Using $\theta=2\alpha^c_n$, $\tan(\alpha^c_n)=\sqrt[n]{2}-1$, and $\sin(2\arctan(x))=2x/(1+x^2)$:
\begin{align}
    \frac{S_{2\alpha^c_n}}{S}=\left( \frac{2(\sqrt[n]{2}-1)}{1+(\sqrt[n]{2}-1)^2}\right)^{2(d-1)},
\end{align}
so that:
\begin{align}
    \bar{m}(n,d)\geq \left( \frac{1+(\sqrt[n]{2}-1)^2}{2(\sqrt[n]{2}-1)}\right)^{2(d-1)}.
\end{align}
To obtain the simpler bound, note that:
\begin{align}
    \frac{1+(\sqrt[n]{2}-1)^2}{2(\sqrt[n]{2}-1)}=\frac{1}{2(\sqrt[n]{2}-1)}+\frac{(\sqrt[n]{2}-1)}{2}\geq \frac{n}{2},
\end{align}
where we use $1/(\sqrt[n]{2}-1)\geq n$. This follows from $2^x\leq 1+x$ for $0\leq x\leq 1$ and setting $x=1/n$ gives $\sqrt[n]{2}-1\leq 1/n$. Hence:
\begin{align}
    \bar{m}(n,d)\geq \left(\frac{n^2}{4}\right)^{d-1}.
\end{align}
%Asymptotically, $\bar{m}(n,d)\sim (n/\ln 2)^{2(d-1)}$ since $\lim_{n\to\infty} 1/((\sqrt[n]{2}-1)\cdot n) = 1/\ln 2$.

\end{document}